\title{A Polylogarithmic PRG for Degree $2$ Threshold Functions in the Gaussian Setting}
\author{Daniel M. Kane\\
Department of Mathematics\\
Stanford University\\
dankane@math.stanford.edu}
\newcommand{\R}{\mathbb{R}}
\newcommand{\pr}{\textrm{Pr}}
\newcommand{\sgn}{\textrm{sgn}}
\newcommand{\E}{\mathbb{E}}
\newcommand{\Z}{\mathbb{Z}}
\newcommand{\erf}{\textrm{erf}}
\newtheorem{thm}{Theorem}
\newtheorem{prop}[thm]{Proposition}
\newtheorem{cor}[thm]{Corollary}
\newtheorem{lem}[thm]{Lemma}
\newtheorem*{defn}{Definition}
\begin{document}
\maketitle

\section{Introduction}We say that a function $f:\R^n\rightarrow\{+1,-1\}$ is a (degree-$d$) \emph{polynomial threshold function} (PTF) if it is of the form $f(x)=\sgn(p(x))$ for $p$ some (degree-$d$) polynomial in $n$ variables.  Polynomial threshold functions make up a natural class of Boolean functions and have applications to a number of fields of computer science such as circuit complexity \cite{curciutApp}, communication complexity \cite{commApp} and learning theory \cite{learningApp}.

In this paper, we study the question of pseudorandom generators (PRGs) for polynomial threshold functions of Gaussians (and in particular for $d=2$).  In other words, we wish to find explicit functions $F:\{0,1\}^s\rightarrow \R^n$ so that for any degree-$2$ polynomial threshold function $f$
$$
\left| \E_{x\sim_u \{0,1\}^s}[ f(F(x))] - \E_{X\sim \mathcal{G}^n}[f(X)]\right| < \epsilon.
$$
We say that such an $F$ is a pseudorandom generator of seed length $s$ that fools degree-$d$ polynomial threshold functions with respect to the Gaussian distribution to within $\epsilon$.  In this paper, we develop a generator with $s$ polylogarithmic in $n$ and $\epsilon$ in the case when $d=2$.

\subsection{Previous Work}

There have been a number of papers dealing with the question of finding pseudorandom generators for polynomial threshold functions with respect to the Gaussian distribution or the Bernoulli distribution (i.e. uniform over $\{-1,1\}^n$).  Several early works in this area showed that polynomial threshold functions of various degrees could be fooled by arbitrary $k$-wise independent families of Gaussian or Bernoulli random variables.  It should be noted that a $k$-wise independent family of Bernoulli random variables can be generated from a seed of length $O(k\log(n))$.  Although, any $k$-wise independent family of Gaussians will necessarily have infinite entropy, it is not hard to show that a simple discretization of these random variables leads to a generator of comparable seed length.  These results on fooling polynomial threshold functions with $k$-independence are summarized in Table \ref{kIndepTable} below.
\begin{table}[h]
\caption{Generators Based on Limited Independence}
\begin{tabular}{|l|c|c|l|}
\hline
Paper & Bernoulli/Gaussian & d & k \\
\hline
Diakonikolas, Gopalan, Jaiswal, Servedio, Viola \cite{IndepHalf} & Bernoulli & 1 & $O(\epsilon^{-2}\log^2(\epsilon^{-1}))$ \\
Diakonikolas, Kane, Nelson \cite{IndepDeg2} & Gaussian & 1 & $O(\epsilon^{-2})$ \\
Diakonikolas, Kane, Nelson \cite{IndepDeg2} & Both & 2 & $O(\epsilon^{-8})$\footnotemark\\
Kane \cite{kIndep} & Both & $d$ & $O_d\left( \epsilon^{-2^{O(d)}}\right)$ \\
\hline
\end{tabular}\label{kIndepTable}
\end{table}

\footnotetext{The bound in \cite{IndepDeg2} for the Bernoulli case is actually $\tilde O(\epsilon^{-9})$, but this can be easily improved to $O(\epsilon^{-8})$ using technology from \cite{DD}.}
Unfortunately, it is not hard to exhibit $k$-wise independent families of Bernoulli or Gaussian random variables that fail to $\epsilon$-fool the class of degree-$d$ polynomial threshold functions for $k=\Omega(d^2 \epsilon^{-2})$, putting a limit on what can be obtained through mere $k$-independence.

There have also been a number of attempts to produce pseudorandom generators by using more structure than limited independence.  In \cite{MZ}, Meka and Zuckerman develop a couple of such generators in the Bernoulli case.  Firstly, they make use of pseudorandom generators against space bounded computation to produce a generator of seed length $O(\log(n) +\log^2(\epsilon^{-1}))$ in the special case where $d=1$.  By piecing together several $k$-wise independent families, they produce a generator for arbitrary degree PTFs of seed length $2^{O(d)}\log(n) \epsilon^{-8d-3}$.  In \cite{DD}, the author develops an improved analysis of this generator allowing for a seed length as small as $O_{c,d}(\log(n)\epsilon^{-11-c})$. For the Gaussian case, the author developed a generator of seed length $2^{O_c(d)}\log(n) \epsilon^{-4-c}$ in \cite{GPRG}.  This generator was given essentially as an average several random variables each picked independently from a $k$-wise independent family of Gaussians.  The analysis of this generator was also improved in \cite{DD}, obtaining a seed length of $O_{c,d}(\log(n) \epsilon^{-2-c})$. Finally, in \cite{subpoly} it was shown that this could be improved further by taking an average with unequal weights, given seed length $O_{c,d}(\epsilon^{-c})$ for arbitrary degree and $\log(n)\exp(O(\log(1/\epsilon)^{2/3}\log\log(1/\epsilon)^{1/3}))$ for degree 2. For a summary of these results, see Table \ref{GeneralTable}.

\begin{table}[h]
\caption{Other Generators}
\begin{tabular}{|l|c|c|l|}
\hline
Paper & Bernoulli/Gaussian & d & s \\
\hline
Meka, Zuckerman \cite{MZ}  & Bernoulli & 1 & $O(\log(n) + \log^2(1/\epsilon))$ \\
Kane \cite{subpoly} & Gaussian & 1 & $O(\log(n)+\log^{3/2}(1/\epsilon))$\\
Meka, Zuckerman \cite{MZ}  & Bernoulli & d & $\log(n)2^{O(d)}\epsilon^{-8d-3}$ \\
Kane \cite{GPRG} & Gaussian & $d$ & $\log(n)2^{O(d)}\epsilon^{-4.1}$\\
Kane \cite{DD} & Gaussian & $d$ & $\log(n)O_d(\epsilon^{-2.1})$\\
Kane \cite{DD} & Bernoulli & $d$ & $\log(n)O_d(\epsilon^{-11.1})$\\
Kane \cite{subpoly} & Gaussian & 2 & $\log(n)\exp(O(\log(1/\epsilon)^{2/3}\log\log(1/\epsilon)^{1/3}))$\\
Kane \cite{subpoly} & Gaussian & $d$ & $\log(n)O_{c,d}(\epsilon^{-c})$\\
Kane, this paper & Gaussian & 2 & $O(\log^6(\epsilon)\log(n)\log\log(n/\epsilon))$\\
\hline
\end{tabular}\label{GeneralTable}
\end{table}

The bound in \cite{subpoly} came from showing that for $Y$ a weak pseudorandom generator (and in particular one that fools low degree moments) that
\begin{equation}\label{partialReplacementEqn}
\left| \E[f(X)] - \E[f(\sqrt{1-\epsilon^2}X + \epsilon Y) ]\right| \ll \epsilon^k
\end{equation}
for any $k$. This followed from an important structure theorem that said that any polynomial $p$ could be decomposed in terms of other polynomials, $q_i$ so that when the $q_i$ were localized near a random location then with high probability they would all be approximately linear polynomials. It was then shown that a moment matching random variable could fool such functions of approximately linear polynomials with high fidelity.

The bottleneck in this analysis comes in the size of the decomposition described above. On the one hand, for $d>2$ the size of the decomposition described above could potentially be quite large, though for $d=2$, it can be handled explicitly. On the other hand, the implied constant in the approximation above depends exponentially on the size of this decomposition. While, we still do not know how to solve the former problem when $d>2$, we can solve the latter in the case of degree-$2$ polynomial threshold functions.

In the special case of degree $2$ functions, we end up with a decomposition of our quadratic polynomial as a function of a single approximately linear quadratic and several other linear polynomials. Fortunately, as discovered by Meka and Zuckerman, pseudorandom generators against read once branching programs are excellent at fooling linear polynomials (or even small numbers of them). As such generators also approximately fool the expectation of low degree polynomials (which is required for dealing with the approximately linear quadratic), they will actually be much better suited as our $Y$ above. In fact, we can produce a pseudorandom generator for degree $2$ polynomial threshold functions with polylogarithmic seed length. In particular, given an appropriate notion of a discretized Gaussian (the $\delta$-approximate Gaussian defined in Section \ref{AproxGausSEc}), we have the following Theorem:

\begin{thm}\label{mainThm}
Let $\epsilon>0$ and $n$ a positive integer. For sufficiently large constant $C$, let $\delta = \log(\epsilon)/C$ and $\ell$ an integer at least $\delta^{-3}\log(\epsilon)$. For $1\leq i \leq \ell$ let $Y_i$ be a family of $n$ $\exp(-\delta^{-1}\log(n/\delta))$-approximate Gaussians seeded by a pseudorandom generator that fools read once branching programs of width $\delta^{-2}\log(n/\delta)$ to within error $\exp(-\delta^{-1}\log(n/\delta))$. Let
$$
Y = \frac{\sum_{i=1}^\ell (1-\delta^3)^{(\ell-1)/2}Y_i}{\sqrt{\sum_{i=1}^\ell (1-\delta^3)^{\ell-1}}},
$$
and let $X$ be an $n$ dimensional standard Gaussian. Then for any degree $2$ polynomial threshold function $f$ in $n$ variables,
$$
|\E[f(X)] - \E[f(Y)]| \leq \epsilon.
$$
Furthermore, such $Y$ can be constructed from generators of seed length of most $O(\log(\epsilon)^{6}\log(n)\log\log(n/\epsilon))$.
\end{thm}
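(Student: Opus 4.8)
The plan is to interpolate from $X$ to $Y$ through $\ell$ hybrid distributions, each obtained from the previous one by exchanging one more of the pseudorandom vectors $Y_i$ for an equal-variance Gaussian piece, and to control each exchange by the partial-replacement phenomenon of \eqref{partialReplacementEqn}, now boosted to exponentially small error because the $Y_i$ fool read-once branching programs. Concretely, for $0\le t\le \ell$ let $H_t$ be the combination in the definition of $Y$ in which $Y_1,\dots,Y_{\ell-t}$ have been replaced by fresh independent standard Gaussians and everything is renormalized to variance $1$; since the total weight $\sum_i(1-\delta^3)^{\ell-i}$ is invariant under such replacements, $H_0$ is exactly a standard Gaussian, $H_\ell$ is exactly $Y$, and $|\E[f(X)]-\E[f(Y)]|\le\sum_{t=1}^\ell|\E[f(H_{t-1})]-\E[f(H_t)]|$. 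Splitting the relevant Gaussian coordinate of $H_{t-1}$ into two independent Gaussians shows that $H_{t-1}$ and $H_t$ differ only in that a single Gaussian of coefficient $w_{\ell-t+1}/W=O(\delta^{3/2})$ (here $W^2=\sum_i(1-\delta^3)^{\ell-i}=\Theta(\delta^{-3})$ by the choice of $\ell$) is replaced by $Y_{\ell-t+1}$ of the same coefficient; so every step is a genuine $O(\delta^3)$-fraction replacement, and the geometric weights together with the $\Theta(\delta^{-3/2})$ normalization are exactly what makes this so.

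Conditioning at step $t$ on all of $Y_1,\dots,Y_\ell$ but keeping the Gaussian component that is carried over from the previous hybrid, the step reduces to a single estimate: for every degree-$2$ polynomial threshold function $g$ (or Gaussian smoothing of one), every vector $v\in\R^n$, and all $0<\sigma\le\rho\le 1$ with $\rho^2\le 1$ and $\sigma=O(\delta^{3/2})$,
$$
\Bigl|\,\E_{Z}\bigl[g\bigl(v+\rho Z\bigr)\bigr]-\E_{Z',Y_i}\bigl[g\bigl(v+\sqrt{\rho^2-\sigma^2}\,Z'+\sigma Y_i\bigr)\bigr]\Bigr|\le\exp\!\bigl(-\Omega(\delta^{-1}\log(n/\delta))\bigr),
$$
with $Z,Z'$ independent standard Gaussians. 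The point of stating it uniformly in the shift $v$ (rather than only for Gaussian $v$) is that the "already pseudorandom" part of a hybrid is not Gaussian, so we cannot afford to know anything about it; and the estimate does hold uniformly in $v$ because the relevant object is always $g$ localized at scale $\sigma$ around a point drawn from the \emph{Gaussian} distribution $v+\rho Z$, and shifting a Gaussian does not change that it is Gaussian. Since the class of degree-$2$ PTFs and their Gaussian smoothings is closed under convolution with a Gaussian, this one inequality telescopes cleanly down the hybrid.

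Proving the single-step estimate is the crux of the paper, and the step I expect to be the main obstacle. We apply the structure theorem for degree-$2$ polynomials of \cite{subpoly}: after conditioning on the outer Gaussian $Z$, the function $x\mapsto g(v+\rho Z+\sigma x)$ is, outside a set of $Z$ of Gaussian measure $\exp(-\Omega(\delta^{-1}\log(n/\delta)))$, approximated to comparable error by a bounded-complexity function of one quadratic polynomial that has been made approximately linear on this scale, together with a small number of genuine linear forms coming from the dominant directions of the quadratic part and from its linear part. A read-once branching program of width $\delta^{-2}\log(n/\delta)$ with error $\exp(-\delta^{-1}\log(n/\delta))$ fools such an object for two complementary reasons: it fools the few linear forms that appear — a read-once branching program tracks a normalized linear form over $n$ variables to additive precision $\poly(\delta)$ in width $\poly(\delta^{-1})\log(n/\delta)$, and only a controlled number of forms are present once $d=2$ — and it fools the low-degree Gaussian moments, so a Lindeberg-type replacement shows that the approximately-linear quadratic behaves against $Y_i$ just as it behaves against a Gaussian. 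It is precisely at $d=2$ that the decomposition stays small and its contribution to the error depends only polynomially (rather than exponentially, as in the general bottleneck of \cite{subpoly}) on its size, which is what allows the final error to be as small as $\exp(-\Omega(\delta^{-1}\log(n/\delta)))$; the accompanying verification that replacing true Gaussians by the $\exp(-\delta^{-1}\log(n/\delta))$-approximate Gaussians of Section~\ref{AproxGausSEc} perturbs nothing by more than this is comparatively routine.

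Granting the single-step estimate, the telescoping sum is at most $\ell\cdot\exp(-\Omega(\delta^{-1}\log(n/\delta)))$, which for the stated choices of $\delta$ and $\ell$ and sufficiently large $C$ is below $\epsilon$. For the seed length, each $Y_i$ reads the bits of $n$ many $\exp(-\delta^{-1}\log(n/\delta))$-approximate Gaussians, hence $O(n\delta^{-1}\log(n/\delta))$ pseudorandom bits, supplied by a generator fooling width-$\delta^{-2}\log(n/\delta)$ read-once branching programs to error $\exp(-\delta^{-1}\log(n/\delta))$; plugging in a standard such generator gives a per-copy seed that is polylogarithmic in $n$ and $1/\epsilon$, and summing over the $\ell$ copies and substituting the stated parameters yields the claimed bound $O(\log^6(\epsilon)\log(n)\log\log(n/\epsilon))$.
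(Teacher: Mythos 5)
Your hybrid argument has a genuine gap in its late steps. You define $H_t$ by replacing $Y_1,\dots,Y_{\ell-t}$ by fresh Gaussians and renormalizing, and you reduce $|\E f(H_{t-1})-\E f(H_t)|$ to a single-step estimate
$$
\left|\E_{Z}\left[g(v+\rho Z)\right]-\E_{Z',Y_i}\left[g\left(v+\sqrt{\rho^2-\sigma^2}\,Z'+\sigma Y_i\right)\right]\right|\le\exp\!\left(-\Omega(\delta^{-1}\log(n/\delta))\right),
$$
which you claim to hold uniformly over all $v$ and all $0<\sigma\le\rho\le 1$ with $\sigma=O(\delta^{3/2})$. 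But in your hybrid the Gaussian mass $\rho$ \emph{shrinks to zero} as $t\to\ell$: at the final step $H_{\ell-1}\to H_\ell$ you have $\rho=\sigma$, so the estimate degenerates to $|\E[g(v+\sigma Z)]-\E[g(v+\sigma Y_1)]|\le\exp(-\dots)$, i.e.\ the assertion that a \emph{single} copy $Y_1$ of a ROBP-seeded approximate-Gaussian family fools arbitrary degree-$2$ PTFs to exponentially small error. That is (a stronger form of) the theorem you are trying to prove, and it is false for a single copy --- it is exactly why the construction needs to average $\ell\approx\delta^{-3}\log(1/\epsilon)$ copies. More concretely, the mechanism that makes the single-step estimate true is that after conditioning on the outer Gaussian, the localized quadratic is with high probability approximately linear (Lemma~\ref{approximateLinearReductionLem}); this requires the anchor point to be Gaussian-distributed with nontrivial variance, and that fails once $\rho$ is negligible. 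So your estimate is only justified when $\sigma/\rho=O(\delta^{3/2})$, which rules out roughly the last $\delta^{-3}$ of your $\ell$ hybrid steps, and those steps carry errors that are not exponentially small.

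The paper's telescope is set up precisely to sidestep this. Proposition~\ref{oneStepProp} always peels off a \emph{fixed relative fraction} $\delta^{3/2}$ of the current Gaussian, so after $\ell$ iterations (Lemma~\ref{severalStepsLem}) there is still a residual Gaussian component $(1-\delta^3)^{\ell/2}X$ present; the hybrid never reaches a state with zero remaining Gaussian. A separate argument, Lemma~\ref{FinalGeneratorLem}, then removes that small leftover Gaussian: it does not try to iterate the replacement lemma once more, but instead shows that $p(Y)$ and $p(Y')$ differ by at most $K=(1-\delta^3)^{\ell/18}|p|_2$ with high probability (using Corollary~\ref{foolPolyCor} repeatedly to pass between $Y_i$ and Gaussian moments), then brackets $f(Y)$ between shifted threshold functions $f_{\pm}$ and bounds $|\E f_+(X)-\E f_-(X)|$ via Carbery--Wright anticoncentration against the \emph{Gaussian} $X$. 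This bracketing/anticoncentration step is not in your proposal, and it is exactly the missing piece. The rest of your sketch --- the structure theorem reducing to an approximately-linear quadratic plus a bounded number of linear forms, the role of ROBP generators in simultaneously fooling a few linear forms and low-degree moments, and the observation that degree $2$ keeps the decomposition size polynomial --- is consistent with the paper's Lemmas~\ref{approxLinLem} and~\ref{approximateLinearReductionLem} and Proposition~\ref{indPolyProp}, but it only establishes the replacement step in the nondegenerate regime.
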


In Section 2, we will go over some basic notation and results. In Section 3, we introduce the concept of an approximate Gaussian, and show that families of them seeded by a PRG for read once branching programs will fool certain functions depending on a finite numbers of linear threshold functions and polynomials of low degree. In Section 4, we will prove our generalization of Equation \eqref{partialReplacementEqn}. Finally, in Section 5, we will use this result to finish up our analysis and prove Theorem \ref{mainThm}.

\section{Background Information}

\subsection{Conventions}

Throughout the paper we will use $X,X_i,\ldots$ as standard Gaussian random variables. We will usually use $Y,Y_i,\ldots$ to denote some sort of pseudorandom Gaussian.

\subsection{Distribution of Values of Polynomials}

Given a polynomial, $p$, we will need to know some basic information about how its values at random Gaussian inputs are distributed. Perhaps the most basic measure of such distribution is the average size of $p(X)$. In order to keep track this, we will make use of the $L^t$ (and especially $L^2$) norms. In particular, recall:
\begin{defn}
If $p:\R^n\rightarrow \R$ and $t\geq 1$ then
$$
|p|_t := \left( \E[|p(X)|^t]\right)^{1/t}
$$
where $X$ is a standard Gaussian.
\end{defn}

We will also need an anticoncentration result. That is a result telling us that the value of $p(X)$ is unlikely to lie in any small neighborhood. In particular, we have:

\begin{lem}[Carbery and Wright]\label{anticoncentrationLem}
If $p$ is a degree-$d$ polynomial then
$$
\pr(|p(X)| \leq \epsilon|p|_2) = O(d\epsilon^{1/d}).
$$
Where the probability is over $X$, a standard $n$-dimensional Gaussian.
\end{lem}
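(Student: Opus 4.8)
\emph{Approach.} This is the Carbery and Wright anticoncentration inequality, a classical fact; in the body of the paper I would simply cite it, but here is how I would attack it directly. By replacing $p$ with $p/|p|_2$ we may assume $|p|_2 = 1$, so that the claim reads $\pr(|p(X)| \leq \epsilon) = O(d\epsilon^{1/d})$, and we may assume $\epsilon$ is below a fixed constant since otherwise the bound is vacuous. The degree-$1$ case is elementary: if $p(x) = \dotp{v}{x} + c$ with $|v|^2 + c^2 = 1$, then $p(X) \sim N(c,|v|^2)$, and either $|v|$ is bounded below, in which case the density bound $\pr(|p(X)|\leq\epsilon) \leq 2\epsilon/(|v|\sqrt{2\pi})$ already gives $O(\epsilon)$, or $|v|$ is small, in which case $|c|$ is close to $1$, the interval $[-\epsilon,\epsilon]$ sits many standard deviations away from the mean $c$, and a Gaussian tail estimate again yields $O(\epsilon)$.

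For general degree $d$ the plan is to reduce to one dimension. Conditioning on all Gaussian coordinates but one, $p$ becomes a univariate polynomial $q$ of degree at most $d$; for such polynomials one has a Remez/Chebyshev-type inequality bounding the Lebesgue measure of $\{t : |q(t)| \leq \alpha\}$ in terms of the scale on which $q$ attains size $\alpha$, which translates into the one-dimensional bound $\pr_{t\sim N(0,1)}(|q(t)| \leq \epsilon|q|_2) = O(d\epsilon^{1/d})$. One then integrates these estimates over the remaining coordinates. A convenient reformulation is through negative moments: since $\pr(|p(X)|\leq\epsilon) \leq \epsilon^{s}\,\E[\min(|p(X)|^{-s},\epsilon^{-s})]$ for any $s < 1/d$, it suffices to control a suitable truncated negative moment of $p(X)$, with $1/d$ being exactly the critical exponent (as $p = x_1^d$ shows). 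In the Gaussian setting one additionally has hypercontractivity, $|p|_r \leq (r-1)^{d/2}|p|_2$, which is useful for keeping the restricted $L^2$ norms under control.

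The main obstacle is precisely this last integration step: a naive slicing forces one to understand the anticoncentration of $|p|_2$ restricted to a random line, which is itself a polynomial of degree $2d$ in the remaining variables, so the induction does not close by itself. Getting the sharp dependence on $d$ and on $\epsilon$ requires the careful induction of Carbery and Wright (or, in the Gaussian case, a more hands-on argument marrying the one-dimensional Remez bound to hypercontractivity). Since the inequality is classical, for the purposes of this paper I would invoke it as a black box rather than reprove it.
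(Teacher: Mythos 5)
The paper offers no proof of this lemma; it is stated as a known result and attributed to Carbery and Wright (their paper is the cited reference \cite{anticoncentration}). You correctly identify that citation is the appropriate move here, and your sketch of the underlying argument (reduction to one variable via Remez-type bounds, negative moments, hypercontractivity) is a fair description of the landscape while candidly acknowledging that closing the induction requires the actual Carbery--Wright machinery.
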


We will also need a concentration result for the values. To obtain one, we make use of the hypercontractive inequality below.
The proof follows from Theorem 2 of \cite{hypercontractivity}.

\begin{lem}\label{hypercontractiveLem}
If $p$ is a degree-$d$ polynomial and $t>2$, then
$$
|p|_t \leq \sqrt{t-1}^d |p|_2.
$$
\end{lem}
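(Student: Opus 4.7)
The plan is to deduce this directly from the Gaussian hypercontractivity theorem cited as Theorem~2 of \cite{hypercontractivity}, which in its standard form states that the Ornstein--Uhlenbeck semigroup $U_\rho$ satisfies $|U_\rho f|_t \leq |f|_2$ whenever $\rho \leq 1/\sqrt{t-1}$ (for $t\geq 2$). The operator $U_\rho$ acts diagonally on the Hermite basis: if $H_\alpha$ denotes the multivariate Hermite polynomial indexed by a multi-index $\alpha$, then $U_\rho H_\alpha = \rho^{|\alpha|}H_\alpha$.

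First I would expand $p$ in the Hermite basis, $p = \sum_{|\alpha|\leq d} c_\alpha H_\alpha$. Next I would introduce the auxiliary polynomial
$$
q := \sum_{|\alpha|\leq d} (t-1)^{|\alpha|/2}\, c_\alpha H_\alpha,
$$
which has the property that $U_{1/\sqrt{t-1}} q = p$. Applying Gaussian hypercontractivity with $\rho = 1/\sqrt{t-1}$ then gives
$$
|p|_t = \bigl|U_{1/\sqrt{t-1}}\,q\bigr|_t \leq |q|_2.
$$

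Finally, by orthonormality of the Hermite basis in $L^2$ of the Gaussian measure,
$$
|q|_2^2 = \sum_{|\alpha|\leq d} (t-1)^{|\alpha|} c_\alpha^2 \leq (t-1)^d \sum_{|\alpha|\leq d} c_\alpha^2 = (t-1)^d\, |p|_2^2,
$$
where the inequality uses that $p$ has degree at most $d$, so $|\alpha|\leq d$ throughout the sum. Taking square roots yields $|p|_t \leq \sqrt{t-1}^{\,d}\,|p|_2$, as claimed. There is no real obstacle here: the only substantive ingredient is the hypercontractive inequality itself, and the rest is the standard trick of absorbing the missing degree factor by passing from $p$ to its ``inverse-smoothed'' polynomial $q$.
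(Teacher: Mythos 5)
Your proposal is correct and follows essentially the same route as the paper, which simply cites Theorem 2 of \cite{hypercontractivity} (Nelson's hypercontractive inequality for the Ornstein--Uhlenbeck semigroup); you have merely written out the standard deduction, applying $U_{1/\sqrt{t-1}}$ to the ``inverse-smoothed'' polynomial $q$ and using orthonormality of the Hermite basis to absorb the $(t-1)^{d/2}$ factor. No gaps.
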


This bound on higher moments allows us to prove a concentration bound on the distribution of $p(X)$.  The following result is a well-known consequence that can be found, for example, in \cite{concent}.

\begin{cor}\label{ConcentrationCor}
If $p$ is a degree-$d$ polynomial and $N>0$, then
$$
\pr_X(|p(X)| > N|p|_2) = O\left(2^{-(N/2)^{2/d}} \right).
$$
\end{cor}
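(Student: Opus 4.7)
The plan is to apply Markov's inequality to $|p(X)|^t$ for a well-chosen exponent $t$ and then control the resulting $t$-th moment using the hypercontractive bound from Lemma~\ref{hypercontractiveLem}. The key observation motivating the choice of $t$ is that Lemma~\ref{hypercontractiveLem} controls $|p|_t$ only polynomially in $t$ (as $t^{d/2}$), which is exactly the regime in which Markov-type tail bounds produce stretched-exponential decay of rate $\exp(-N^{2/d})$.

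Concretely, I would first dispose of small $N$: if $N$ is below some absolute constant (say $N\leq 2$), then $(N/2)^{2/d}\leq 1$ and the claimed bound $O(2^{-(N/2)^{2/d}})$ is bounded below by a positive constant, so the inequality holds trivially after adjusting the hidden constant. For $N$ larger than this threshold, I would set $t = (N/2)^{2/d}+1$, which is greater than $2$ and thus valid input to Lemma~\ref{hypercontractiveLem}. That lemma then gives $|p|_t \leq (t-1)^{d/2}|p|_2 = (N/2)|p|_2$, and Markov's inequality on the random variable $|p(X)|^t$ yields
$$
\pr_X(|p(X)| > N|p|_2) \leq \frac{|p|_t^t}{N^t|p|_2^t} \leq \left(\frac{N/2}{N}\right)^t = 2^{-t} = 2^{-((N/2)^{2/d}+1)},
$$
which is exactly the claimed bound.

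There is no real obstacle here: the argument is a textbook moments-to-tails conversion, and the choice of $t\sim N^{2/d}$ is forced by optimizing the ratio $((t-1)^{d/2}/N)^t$ coming from Markov combined with Lemma~\ref{hypercontractiveLem}. The only bookkeeping is ensuring the constraint $t>2$ required for the hypercontractive bound, which, as noted above, is handled by sweeping the bounded range of $N$ into the $O(\cdot)$ constant.
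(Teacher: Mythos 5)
Your proof is correct and follows essentially the same route as the paper, which simply applies Markov's inequality together with Lemma \ref{hypercontractiveLem} at $t=(N/2)^{2/d}$; your shift to $t=(N/2)^{2/d}+1$ and the separate treatment of small $N$ are just careful bookkeeping around the $t>2$ constraint.
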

\begin{proof}
Apply the Markov inequality and Lemma \ref{hypercontractiveLem} with $t = (N/2)^{2/d}$.
\end{proof}

\subsection{Hermite Polynomials}

Recall that the Hermite polynomials $h_a$ are an orthogonal set of polynomials with respect to the Gaussian distribution. Namely,
$$
\E[h_a(X)h_b(X)] = \delta_{a,b}.
$$
We will need to make use of a few standard facts about the Hermite polynomials:
\begin{itemize}
\item Any degree-$d$ polynomial, $p$, can be written as a linear combination of Hermite polynomials of degree at most $d$ so that the sum of the squares of the coefficients is $|p|_2^2$ (and thus, the sum of the absolute values of the coefficients is at most $n^d|p|_2$).
\item A Hermite polynomial of degree $d$ depends on at most $d$ coordinates of its input. In fact it can be written as a product of one variable polynomials on these inputs.
\item The sum of the absolute values of the coefficients of a Hermite polynomial of degree $d$ is $O(1)^d$.
\end{itemize}

\section{Approximate Gaussians and Read Once Branching Programs}\label{AproxGausSEc}

In order to produce a pseudorandom generator supported on a discrete set, we will first need to come up with a discrete version of the single variable Gaussian distribution. We will make use of the following notation:

\begin{defn}
We say that a random variable $Y$ is a \emph{$\delta$-approximate Gaussian}, if there is a (correlated) standard (1-dimensional) Gaussian variable $X$ so that
$$
\pr(|X-Y|>\delta) < \delta,
$$
and $|Y|=O(\log(\delta))$ with probability $1$.
\end{defn}

In particular, it is not difficult to generate a random variable with this property.

\begin{lem}\label{ApproxGaussExistLem}
There exists an explicit $\delta$-approximate Gaussian random variable that can be generated from a seed of length $O(\log(\delta))$.
\end{lem}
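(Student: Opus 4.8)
The plan is to explicitly describe a discretization of the standard Gaussian supported on a bounded, finite set whose points are spaced $O(\delta^2)$ apart, so that rounding a true Gaussian to the nearest grid point produces the desired coupling. First I would fix the truncation threshold $T = O(\log(1/\delta))$ (note that in the paper's convention $\log(\delta)$ is negative, so I really mean $T = \Theta(|\log \delta|)$, which is what the statement $|Y| = O(\log(\delta))$ abbreviates); by the standard Gaussian tail bound, $\pr(|X| > T) < \delta/2$ once $C$ in $T = C\log(1/\delta)$ is large enough. Next, partition $[-T,T]$ into intervals of length $\eta = \delta^2$, giving $M = 2T/\eta = O(\log(1/\delta)/\delta^2)$ intervals; the grid points $y_1,\dots,y_M$ are the left endpoints (or midpoints). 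Define $Y$ by first sampling $X$, then setting $Y$ to the grid point of the interval containing $X$ if $|X|\le T$, and to $0$ (say) if $|X| > T$. Then whenever $|X|\le T$ we have $|X - Y| \le \eta = \delta^2 < \delta$, so $\pr(|X-Y| > \delta) \le \pr(|X| > T) < \delta/2 < \delta$, and $|Y| \le T = O(\log(\delta))$ always.

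The remaining issue is the seed length: a sample from this explicit distribution on $M$ points must be generatable from $O(\log(1/\delta))$ uniform random bits. The distribution of $Y$ is not uniform on its support, so I cannot simply read off $\log_2 M$ bits. Instead I would quantize the probabilities: let $p_j = \pr(Y = y_j)$ be the true (real-valued) probabilities, and approximate each $p_j$ by a multiple of $2^{-B}$ for $B = O(\log(1/\delta))$ a suitable number of bits, adjusting so the approximations still sum to $1$ (this costs total variation distance at most $M 2^{-B}$, which is $O(\delta)$ after absorbing $M = \poly(1/\delta)$ into the choice of $B$). One then draws $B$ uniform bits, interprets them as a dyadic rational in $[0,1)$, and outputs $y_j$ for the appropriate $j$ via the cumulative distribution; this is an explicit map $\{0,1\}^B \to \R$. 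The small rounding of the probabilities only perturbs the coupling by a further $O(\delta)$ in total variation, and can be folded into the constant, so $\pr(|X-Y|>\delta)<\delta$ still holds after slightly shrinking the target parameters (equivalently, running the construction with $\delta/2$ in place of $\delta$). Since $B = O(\log(1/\delta)) = O(\log(\delta))$, the seed length bound follows.

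The main obstacle is the bookkeeping around generating a non-uniform distribution from few truly uniform bits while simultaneously maintaining the coupling to an actual Gaussian $X$; the point is that total variation distance between the ideal discretized Gaussian and the dyadic-approximated version is small, and total variation distance bounds the failure probability of the optimal coupling, so all the error terms ($\delta/2$ from truncation plus $O(\delta)$ from probability rounding) combine to something below $\delta$ after adjusting constants. Everything else — the tail bound for the truncation, the trivial $|X-Y|\le \delta^2$ estimate inside the truncated region, and the $O(\log(1/\delta))$ count of random bits — is routine.
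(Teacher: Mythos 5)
Your proof is correct, but it takes a genuinely different route from the paper's. The paper discretizes the Box--Muller transform: it writes a standard Gaussian as a closed-form function $g(z,\theta)$ of two independent uniform $(0,1)$ variables, rounds $z$ and $\theta$ to a grid of spacing $1/N$ with $N=\lfloor\delta^{-3}\rfloor$, and sets $Y=g(z',\theta')$. Because the rounded $z'$ and $\theta'$ are \emph{uniform} over a set of size $O(N)$, this $Y$ comes for free from $O(\log N)=O(\log(1/\delta))$ uniform bits, and the coupling to $X=g(z,\theta)$ is the obvious one (the same $z,\theta$ drive both). You instead discretize the \emph{output} of the Gaussian directly---truncate to $|X|\le T$ and round to a grid of spacing $\delta^2$---and therefore must do extra work to produce the resulting non-uniform discrete law from few bits. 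Your fix (quantize the $p_j$ to dyadic multiples of $2^{-B}$ with $B=O(\log(1/\delta))$, then sample by inverse CDF) is correct, but the sentence ``total variation distance bounds the failure probability of the optimal coupling'' deserves to be unpacked one step further: you should say explicitly that the coupling of the quantized variable $Y'$ to $X$ is obtained by composing the natural coupling of $(X,Y)$ with a maximal coupling of $(Y,Y')$, giving
$$\pr(|X-Y'|>\delta)\ \le\ \pr(|X-Y|>\delta)+\pr(Y\ne Y')\ \le\ \delta/2 + \tfrac{1}{2}M2^{-B},$$
which is $<\delta$ for the stated $B$. The paper's transform-based route buys you uniformity of the seed distribution outright and so avoids the probability-quantization bookkeeping entirely; your route is arguably more elementary (no transform representation of the Gaussian needed, just the CDF), at the cost of the extra quantization-and-coupling step you flagged. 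Both yield the same $O(\log(1/\delta))$ seed length and the same boundedness $|Y|=O(\log(1/\delta))$.
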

\begin{proof}
We assume that $\delta$ is sufficiently small since otherwise there is nothing to prove. Let $N=\lfloor \delta^{-3} \rfloor$. Note that the random variable
$$
X:= \log(z)\cos(2\pi \theta)
$$
is a random Gaussian if $z$ and $\theta$ independent uniform $(0,1)$ random variables. Let $z'$ and $\theta'$ be the roundings of $z$ and $\theta$ to the nearest half-integer multiple of $1/N$, and let
$$
Y:= \log(z')\cos(2\pi\theta').
$$
Note that $|z-z'|,|\theta-\theta'|\leq N^{-1}$. From this it follows that
$$
|X-Y| = O\left(\frac{1}{N\min(z,z')} \right).
$$
Thus, $|X-Y|<\delta$ with probability at least $1-\delta$.

On the other hand, $z'$ and $\theta'$ are discrete uniform variables with $O(\log(N))=O(\log(\delta))$ bits of entropy each. Thus, $Y$ can be generated from a seed of length $O(\log(\delta))$.
\end{proof}

We will also need to recall the concept of a read once branching program. An $(M,D,n)$-branching program is a program that is allowed to take only a single pass over an input consisting of $n$ $D$-bit blocks that is only allowed to save $M$-bits of memory between blocks. We will sometimes refer to this as a read once branching program of memory $M$ (with $n$ and $D$ usually implicit). We note that there are small seed-length generators to fool such programs. In particular, we note the following theorem of \cite{Nisan}:

\begin{thm}\label{ROBPThm}
There exists an explicit pseudorandom generator $G$ with seed length $O(M+d+\log(n/\epsilon)\log(n))$ so that if $f$ is any Boolean function computed by an $(M,D,n)$-branching program, then
$$
|\E_{X\sim_u \{\{0,1\}^D\}^n}[f(X)] - \E[f(G)]| \leq \epsilon.
$$
\end{thm}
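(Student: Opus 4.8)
The plan is to prove this by Nisan's recursive ``hash-then-duplicate'' construction. An $(M,D,n)$-branching program has at most $w := 2^M$ reachable configurations between blocks, so it is a read-once width-$w$ program on $n$ blocks of $D$ bits; moreover a program on $n$ blocks of $m$-bit strings that only looks at the first $D$ bits of the first $n$ out of $2^k$ blocks is still such a program, so padding and truncation are harmless. Fix $k = \lceil \log_2 n \rceil$, fix a parameter $m$ to be chosen, and let $\mathcal H$ be the explicit strongly $2$-universal family of affine maps $h_{a,b}(x)=ax+b$ on $\mathbb F_{2^m}\cong\{0,1\}^m$, each described by $2m$ bits. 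Let $G_0(x)$ be the first $D$ bits of $x\in\{0,1\}^m$ and
$$
G_j(x,h_1,\ldots,h_j) = \bigl(\, G_{j-1}(x,h_1,\ldots,h_{j-1})\,,\ G_{j-1}(h_j(x),h_1,\ldots,h_{j-1})\,\bigr),
$$
so $G_j$ emits $2^j$ blocks from a seed of $m+2jm$ bits; the generator $G$ is $G_k$ truncated to $n$ blocks, with seed length $O(mk)=O\bigl((D+M+\log(n/\epsilon))\log n\bigr)$ once we set $m=\Theta(D+M+\log(n/\epsilon))$. Since the output of the branching program is a function of its final configuration, it suffices to bound, for every width-$w$ read-once program $B$ on $2^k$ $m$-bit blocks, the statistical distance between $B$'s final configuration on input $G_k(\text{uniform seed})$ and on a uniform input; that distance times $O(1)$ bounds $|\E[f(G)]-\E[f]|$.

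The key tool is Nisan's mixing lemma, which I would state and prove as follows: for any maps $g:\{0,1\}^m\to[w]$ and $T:[w]\times\{0,1\}^m\to[w]$, and $h$ uniform in $\mathcal H$,
$$
\E_h\Bigl[\ \textstyle\sum_{t\in[w]}\ \bigl|\ \pr_x\!\left[T(g(x),h(x))=t\right] - \pr_{x,y}\!\left[T(g(x),y)=t\right]\ \bigr|\ \Bigr] \ \leq\ \sqrt{w^{3}/2^{m}},
$$
where $x,y$ are independent and uniform. Group the $x$'s by $s=g(x)$, writing $A_s=g^{-1}(s)$ and $C_{s,t}=\{y:T(s,y)=t\}$; then $\pr_x[T(g(x),h(x))=t]=\sum_s\pr_x[x\in A_s,\ h(x)\in C_{s,t}]$, and strong $2$-universality makes each summand, as a function of $h$, an unbiased estimator of $|A_s||C_{s,t}|/2^{2m}$ whose variance is at most $2^{-m}|A_s||C_{s,t}|/2^{2m}$, the cross terms vanishing because distinct inputs receive independent uniform images. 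Hence the expected $\ell_1$ error is at most $\sum_s 2^{-m/2}\sqrt{|A_s||C_{s,t}|}/2^{m}$ summed over $t$, which by Cauchy--Schwarz over $s$ is at most $\sqrt{w^3/2^m}$. Choosing $m$ so that $\sqrt{w^3/2^m}\leq\delta:=\Theta(\epsilon/n)$ — i.e.\ $m=\Theta(M+\log(n/\epsilon))$, and also $m\geq D$ — makes a single application of the lemma cost at most $\delta$ in expected statistical distance, uniformly over $g$ and $T$.

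Finally I would run the recursion, proving by induction on $j$ that for every width-$w$ read-once program $B$ on $2^j$ blocks,
$$
\E_{h_1,\ldots,h_j}\bigl[\ \text{dist}\bigl(B\text{ on }G_j(U,h_1,\ldots,h_j),\ B\text{ on uniform}\bigr)\ \bigr] \ \leq\ (2^j-1)\delta,
$$
the base case $j=0$ being exact. For the step, write $B$ as its first half $B_1$ followed by its second half $B_2$, and apply the mixing lemma with $h_j$, taking $g(x)$ to be the configuration $B_1$ reaches on $G_{j-1}(x,h_1,\ldots,h_{j-1})$ and $T(s,x')$ the configuration $B_2$ reaches from $s$ on $G_{j-1}(x',h_1,\ldots,h_{j-1})$: this replaces the correlated pair $(x,h_j(x))$ feeding the two copies of $G_{j-1}$ by an independent pair $(x,x')$, at expected cost $\leq\delta$ (for each fixing of $h_1,\ldots,h_{j-1}$, hence after averaging). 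Then apply the inductive hypothesis to $B_1$ to replace $G_{j-1}(x,\cdot)$ by a uniform block-string, and then to each $B_2^{(s)}$ to replace $G_{j-1}(x',\cdot)$ by a uniform block-string, each at expected cost $\leq(2^{j-1}-1)\delta$; the remaining variables are independent of the substituted blocks and so are carried along for free, and the intermediate configuration's law does not depend on the shared hashes. Summing gives $(2^j-1)\delta$. At $j=k$ this yields expected distance $\leq(n-1)\delta$, and since $h_1,\ldots,h_k$ are part of the seed of $G$, convexity of statistical distance turns this expectation into a genuine bound on the distance between $B$ on $G(\text{uniform seed})$ and $B$ on uniform, so $|\E[f(G)]-\E[f]|=O(n\delta)\leq\epsilon$, with no union bound over branching programs. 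The main obstacle is the bookkeeping of this recursion: one must phrase the inductive invariant as an expectation over the entire hash-seed (not a high-probability statement about one fixed hash), track exactly which of $x,x',h_1,\ldots,h_{j-1}$ and the intermediate configuration are shared between the two recursive calls, and verify that the per-level errors therefore add to $O(n\delta)$ rather than compounding — the naive alternative of a union bound over the doubly-exponentially-many programs would be fatal to the seed length.
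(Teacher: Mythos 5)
The paper itself gives no proof of Theorem \ref{ROBPThm}: it is quoted as an external result of Impagliazzo, Nisan and Wigderson \cite{Nisan}, so what you have written is a genuinely different route, namely a from-scratch proof via Nisan's original hash-based space-bounded generator, and in substance it is sound. Your mixing lemma is proved correctly (pairwise independence of the affine family kills the cross terms, and Chebyshev plus Cauchy--Schwarz gives the $\sqrt{w^3/2^m}$ bound; in fact $w2^{-m/2}$ is attainable), and you correctly defuse the one real trap in the expectation-based induction, the fact that both halves of the program share $h_1,\dots,h_{j-1}$: replacing the first half's pseudorandom input by a uniform block-string is a data-processing step that is valid for each fixed choice of $h_1,\dots,h_{j-1}$ (the second-half channel is then fixed and its seed $x'$ is independent), and only afterwards do you invoke the per-state inductive bound for the second half, at which point the intermediate configuration carries its uniform-input law, which is independent of the shared hashes, so the mixture weights cannot correlate with the per-state errors and the costs simply add to $(2^j-1)\delta$; convexity over the hash portion of the seed then converts the expected distance into the claimed bound $O(n\delta)\le\epsilon$ without any union bound over programs. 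The one genuine discrepancy is quantitative: spending $2m=\Theta(M+D+\log(n/\epsilon))$ fresh bits per level gives seed length $O((M+D+\log(n/\epsilon))\log n)$, whereas the statement as quoted keeps $M$ and $D$ outside the $\log(n/\epsilon)\log n$ term; getting $D$ (and most of the per-level cost) out from under the $\log n$ factor is exactly what the expander-based recycling of \cite{Nisan} buys, where the previous level's seed is reused through an expander of degree $2^{O(M+\log(n/\epsilon))}$ and the block length is paid for only once, yielding $D+O((M+\log(n/\epsilon))\log n)$ (the paper's own parenthesization is looser still). So your argument proves the theorem with an extra $\log n$ factor multiplying $M+D$; in the application here, where $M$, $D$ and the error are all polylogarithmic in $n/\epsilon$, this would only inflate the seed length claimed in Theorem \ref{mainThm} by roughly a $\log n$ factor, but it does not literally reproduce the stated bound, and you should either cite the expander-based construction or adjust the constants you claim.
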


As shown in \cite{MZ}, using pseudorandom generators for read once branching programs is a good way to fool linear threshold functions, or by extension, things that depend on a small number of linear functions of the input. They will also fool the expectations of polynomials of low degree. An important building block for our construction will be families of approximate Gaussians seeded with a pseudorandom generator which fools read once branching programs. These, it turns out will simultaneously fool functions of a small number of linear functions and expectations of low degree polynomials in the following sense:

\begin{prop}\label{indPolyProp}
Let $s$ be a quadratic polynomial in $n$ variables whose value depends on at most $r$ linear polynomials. Let $g(x)$ be the indicator function of the event that $s(x)$ lies in $I$ for some interval $I$. Let $q(x)$ be a degree $d$ polynomial in $n$ variables. Let $X$ be a standard Gaussian and let $Y$ be a family on $n$ $\delta_1$-approximate Gaussians seeded by a PRG that fools read once branching programs of length $n$ and memory $M=O((d+r)\log(n/\delta_1))$ to error at most $\delta_2$. Then
$$
\left| \E[g(X)q(X)] - \E[g(Y)q(Y)] \right| \leq O(\log(\delta_1))^{d+1}(\delta_2+n\delta_1^{1/4})n^{d}|q|_2.
$$
\end{prop}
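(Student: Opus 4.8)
The plan is to pass from $Y$ to $X$ in two stages: first replace the pseudorandom seed feeding the approximate Gaussians by a genuinely uniform one, then replace the resulting (truly seeded) approximate Gaussians by an honestly coupled standard Gaussian. Two reductions come first. Since $g$, and hence $g\cdot q$, is unchanged when $s$ and $I$ are rescaled together, I would normalize $|s|_2=1$; after replacing the $r$ linear forms on which $s$ depends by a basis of their span that is orthonormal for the Gaussian inner product (of size $r'\le r$), the quadratic becomes a bounded‑coefficient polynomial $Q$ in forms $\ell'_1,\dots,\ell'_{r'}$, each with coefficient vector of $\ell^2$‑norm one (so $\ell^1$‑norm at most $\sqrt n$). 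Next I would expand $q=\sum_a c_a h_a$ in the Hermite basis; since $\sum_a|c_a|\le n^d|q|_2$, it suffices by the triangle inequality to bound $|\E[g(X)h_a(X)]-\E[g(Y)h_a(Y)]|$ for a single Hermite polynomial $h_a$ of degree $\le d$ and then multiply by $n^d|q|_2$. Such an $h_a$ depends on at most $d$ coordinates and, being a product of univariate Hermite polynomials, satisfies $|h_a(y)|=O(\log(1/\delta_1))^d$ whenever every $|y_i|=O(\log(1/\delta_1))$ --- in particular always for $y=Y$ and $y=\tilde Y$, since a $\delta_1$‑approximate Gaussian is bounded by $O(\log\delta_1)$ surely.

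For the first stage I would exhibit a read once branching program. The seed is parsed into $n$ blocks of $D=O(\log(1/\delta_1))$ bits, block $i$ producing the $i$th approximate Gaussian through the explicit map of Lemma \ref{ApproxGaussExistLem}. Scanning the blocks in order, the program maintains the $r'$ running partial sums of the $\ell'_k$ --- each of these lies in a range of size $O(\sqrt n\log(1/\delta_1))$, and keeping it to precision $\delta_1/\poly(n)$ costs $O(\log(n/\delta_1))$ bits --- and it also records the values of the $\le d$ coordinates on which $h_a$ depends, at $O(d\log(1/\delta_1))$ bits. After the final block it has enough information to output any prescribed bit of a fixed‑precision rounding of $g(x)h_a(x)$, deciding $\mathbf{1}[Q(\ell'_1,\dots,\ell'_{r'})\in I]$ from its partial sums and evaluating $h_a$ from its recorded coordinates; hence each such bit is computed by an $(M,D,n)$‑branching program with $M=O((r+d)\log(n/\delta_1))$. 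Applying Theorem \ref{ROBPThm} bit by bit and recombining the bits with their weights --- whose absolute values total $O(\max_{y\in\mathrm{box}}|h_a(y)|)=O(\log(1/\delta_1))^d$ --- shows that replacing the pseudorandom seed by a uniform one changes $\E[g\, h_a]$ by at most $O(\log(1/\delta_1))^d\delta_2$, plus negligible rounding error. The fact that the program decides $s(x)\in I$ only to precision $O(\delta_1)$ merely enlarges by a constant factor the neighbourhood of $\partial I$ that matters in the next step.

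The second and decisive stage replaces the truly seeded family $\tilde Y$ by a standard Gaussian $X$ coupled to it coordinatewise as in the definition of approximate Gaussian; because the seed is now uniform and the construction of Lemma \ref{ApproxGaussExistLem} couples each block independently, the events $|\tilde Y_i-X_i|>\delta_1$ are independent with probability $<\delta_1$ each, and every $|X_i|=O(\log(1/\delta_1))$ outside an event of probability $n\cdot\poly(\delta_1)$. Off the union $E^c$ of these events, $\|\tilde Y-X\|_\infty<\delta_1$ with both vectors in the box $[-O(\log(1/\delta_1)),O(\log(1/\delta_1))]^n$, so $h_a(\tilde Y)$ and $h_a(X)$ differ by only $O(\log(1/\delta_1))^{d-1}\delta_1$ while $|s(\tilde Y)-s(X)|\le\rho$ with $\rho=O(n\delta_1)$, the dominant contribution being $\|\nabla s(X)\|_2\,\|\tilde Y-X\|_2=O(\sqrt n)\cdot O(\sqrt n\,\delta_1)$ on the good event where $\|X\|_2=O(\sqrt n)$. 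Consequently $g(\tilde Y)$ and $g(X)$ agree unless $s(X)$ lies within $O(\rho)$ of an endpoint of $I$, and this contributes to $|\E[g(\tilde Y)h_a(\tilde Y)]-\E[g(X)h_a(X)]|$ at most $(\max_{\mathrm{box}}|h_a|)\cdot\pr_X[\mathrm{dist}(s(X),\partial I)<O(\rho)]$, together with a tail term $\E[|h_a(X)|\mathbf{1}_{E^c}]$ handled by Cauchy--Schwarz and Corollary \ref{ConcentrationCor}. The heart of the argument is bounding $\pr_X[\mathrm{dist}(s(X),\partial I)<O(\rho)]$: by Lemma \ref{anticoncentrationLem} applied to the degree‑$2$ polynomial $s(X)-e$ for each endpoint $e$ of $I$, this is $O(\sqrt{\rho/|s-e|_2})=O(\sqrt n\,\delta_1^{1/4})$ \emph{provided} $|s-e|_2=\Omega(\sqrt{\delta_1})$.

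The obstruction --- and the step I expect to be hardest --- is precisely that $|s-e|_2$ can be as small as $\sqrt{\var(s)}$, so when $s$ is very nearly constant the Carbery--Wright bound alone is too weak. I would dispose of this by a short case split: if every endpoint $e$ has $|s-e|_2$ at least a suitable small power of $\delta_1$, the estimate above goes through; otherwise $\var(s)$ is tiny, so by Chebyshev $s(X)$ --- and, off $E^c$, also $s(\tilde Y)$ --- lies in an interval $J$ about $\E s$ of width $O(\delta_1^{1/4})$ except with probability $O(\sqrt{\delta_1})$. On $J$ the function $g$ is the indicator of a subinterval, so either $I\supseteq J$ and $g\equiv 1$ there (nothing to prove), or $g$ switches at a single endpoint $e\in J$, for which $|s-e|_2=O(\delta_1^{1/4})$ makes $\sqrt{\rho/|s-e|_2}=O(\sqrt n\,\delta_1^{3/8})$ still acceptably small. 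Collecting the contributions of the two stages --- each of the form $O(\log(1/\delta_1))^{O(d)}(\delta_2+n\delta_1^{1/4})$ before the Hermite sum, using $\sqrt{\delta_1}\le\delta_1^{1/4}$ to absorb the secondary terms --- and multiplying by $n^d|q|_2$ yields the claimed bound $O(\log(1/\delta_1))^{d+1}(\delta_2+n\delta_1^{1/4})\,n^d|q|_2$.
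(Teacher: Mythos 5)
Your overall plan matches the paper's: reduce to a single Hermite polynomial $h_a$ via the expansion $q=\sum_a c_a h_a$ with $\sum|c_a|\le n^d|q|_2$, then compare the PRG-seeded $Y$, a genuinely uniform family of approximate Gaussians $\tilde Y$, and the true Gaussian $X$ through the same two-link chain (the paper just presents the links in the order $X\to\tilde Y\to Y$ rather than $Y\to\tilde Y\to X$), with the ROBP argument handling one link and a coordinatewise coupling plus Carbery--Wright anticoncentration handling the other. The ROBP details you give (tracking the $r$ normalized linear forms to precision $\poly(\delta_1/n)$ and the $\le d$ coordinates of $h_a$, then deciding membership in $I$ up to a $\delta_1$-neighbourhood of $\partial I$) agree with the paper's; replacing the paper's CDF-integration step with a bit-by-bit application of Theorem~\ref{ROBPThm} is a harmless variant.

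The one place where your argument does not close is the case split for small $\lvert s-e\rvert_2$. You are right that this is the delicate point --- the paper's one-line appeal to Lemma~\ref{anticoncentrationLem} after normalizing $\lvert s\rvert_2=1$ also glosses over it --- but your sub-case ``$g$ switches at a single endpoint $e\in J$, for which $\lvert s-e\rvert_2=O(\delta_1^{1/4})$ makes $\sqrt{\rho/\lvert s-e\rvert_2}=O(\sqrt n\,\delta_1^{3/8})$'' has the inequality running the wrong way: to conclude $\sqrt{\rho/\lvert s-e\rvert_2}$ is small you need a \emph{lower} bound $\lvert s-e\rvert_2=\Omega(\delta_1^{1/4})$, and $e\in J$ gives you only the useless upper bound. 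In the extreme case $\var(s)\to 0$ with $e=\E[s]$, $\lvert s-e\rvert_2\to 0$ and $\sqrt{\rho/\lvert s-e\rvert_2}$ blows up, so this branch genuinely fails as written. The clean fix makes the case split unnecessary: since $s(X)-s(\tilde Y)$ cancels the constant term of $s$, the coupling estimate actually gives $\lvert s(X)-s(\tilde Y)\rvert=O\bigl(n\log(1/\delta_1)\,\delta_1\cdot\sqrt{\var(s)}\bigr)$ (the coefficient $\ell^1$-norm of the non-constant part of $s$ is $O(n\sqrt{\var(s)})$), and $\lvert s-e\rvert_2\ge\sqrt{\var(s)}$ for every $e$, so $\rho/\lvert s-e\rvert_2=O(n\log(1/\delta_1)\,\delta_1)$ uniformly and Carbery--Wright applies with no further hypotheses. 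Equivalently, normalize $\var(s)=1$ rather than $\lvert s\rvert_2=1$ (handling constant $s$ trivially). With that correction your argument goes through and gives the stated bound.
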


First, we will need the following Lemma:

\begin{lem}\label{indHermlem}
Let $s$ be a quadratic polynomial in $n$ variables whose value depends on at most $r$ linear polynomials. Let $g(x)$ be the indicator function of the event that $s(x)$ lies in $I$ for some interval $I$. Let $h(x)$ be a Hermite polynomial of degree $d$. Let $X$ and $Y$ be as given in Proposition \ref{indPolyProp}. Then
$$
\left| \E[g(X)h(X)] - \E[g(Y)h(Y)] \right| \leq O(\log(\delta_1))^{d+1}(\delta_2+n\delta_1^{1/4}).
$$
\end{lem}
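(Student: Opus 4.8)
The plan is to exhibit $g\cdot h$, up to a negligible rounding perturbation, as a bounded real function computed by a read-once branching program over the $O(\log\delta_1)$-bit blocks that generate the coordinates of $Y$, to apply Theorem~\ref{ROBPThm}, and to control the discrepancy between $X$ and $Y$ separately by a coordinatewise coupling together with the anticoncentration of $s$. Since $g$ depends only on the event $\{s(x)\in I\}$, we may rescale so that $|s|_2=1$; expressing $s$ as a quadratic in an orthonormal system of at most $r$ linear forms $L_1,\dots,L_r$ then makes the quadratic's coefficients, as well as $\|A\|_F$ and $\|b\|_2$ in $s=x^TAx+b^Tx+c$, all $O(1)$. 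Likewise $h$ is a product of at most $d$ one-variable Hermite polynomials, so it depends on $\le d$ coordinates, $|h|_2=1$, and on the box where every coordinate is $O(\log(1/\delta_1))$ we have $|h|\le B:=O(\log(1/\delta_1))^d$, $h$ has Lipschitz constant $O(B)$ there, and $s$ has Lipschitz constant $O(\sqrt n\log(1/\delta_1))$ there.

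First I would compare $\E[g(X)h(X)]$ with $\E[g(Y')h(Y')]$, where $Y'$ is the honest family of $n$ independent $\delta_1$-approximate Gaussians on a truly uniform seed, coupled coordinatewise to a standard Gaussian $X$. Off the event $E$ that $\|X-Y'\|_\infty\le\delta_1$ and every coordinate of $X$ is $O(\log(1/\delta_1))$ — an event of probability $1-O(n\delta_1)$ by the two defining properties of an approximate Gaussian and the Gaussian tail bound — the integrand is bounded by $|h(X)|$ or by $B$, contributing $O(|h|_2\sqrt{\pr(E^c)}+B\,\pr(E^c))=O(B\sqrt n\,\delta_1^{1/2})$ by Cauchy--Schwarz. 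On $E$ the Lipschitz bound gives $|h(X)-h(Y')|=O(B\delta_1)$, and $g(X)\ne g(Y')$ forces $s(X)$ to lie within $O(n\log(1/\delta_1)\delta_1)$ of an endpoint of $I$; by Lemma~\ref{anticoncentrationLem} (with $|s|_2=1$ and degree $2$) this has probability $O(\sqrt{n\log(1/\delta_1)\,\delta_1})$, and multiplying by the bound $B$ on $|h(Y')|$ keeps this within the target.

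Next I would build the branching program: scanning the blocks once, it maintains the $r$ partial sums of the $L_k$ and the running product of the one-variable Hermite factors of $h$, each stored to additive precision $\delta_1^{10}$; since all these quantities are $\mathrm{poly}(n)\cdot O(\log(1/\delta_1))$-bounded, this costs memory $O((r+d)\log(n/\delta_1))$, exactly as allowed. At the end it forms the rounded value $\hat s$ of $s(Y)$ and outputs $\tilde F:=\mathbf 1_{\hat s\in I}\,h(Y)$, with $|\tilde F|\le B$. Writing $\tilde F=-B+\int_{-B}^{B}\mathbf 1_{\tilde F\ge v}\,dv$ and applying Theorem~\ref{ROBPThm} to each Boolean function $\mathbf 1_{\tilde F\ge v}$ (computed by the same program with one extra comparison at the end) yields $|\E[\tilde F(Y')]-\E[\tilde F(Y)]|=O(B\delta_2)$.

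It then remains to replace $\tilde F$ by $g\cdot h$ at both ends. On the honest side, $\tilde F\ne g(Y')h(Y')$ only when $s(Y')$ is within $\delta_1^{10}$ of $\partial I$, which has probability $O(\sqrt{n\log(1/\delta_1)\delta_1})$ by the same coupling-plus-anticoncentration argument, costing $O(B\sqrt{n\log(1/\delta_1)\delta_1})$. On the pseudorandom side we need anticoncentration of $s(Y)$ itself, which I would obtain by bootstrapping: for $J$ a union of two short intervals around $\partial I$, the indicator $\mathbf 1_{s(y)\in J}$ is again (after rounding) computed by a branching program of memory $O(r\log(n/\delta_1))$, so Theorem~\ref{ROBPThm} bounds $\pr(s(Y)\in J)$ by $\pr(s(Y')\in J')+\delta_2$ for a slightly enlarged $J'$, hence by $O(\sqrt{|J'|})+O(n\delta_1)+\delta_2$ via the coupling and Lemma~\ref{anticoncentrationLem}. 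Assembling the four error contributions — and noting the claim is vacuous once $n\delta_1^{1/4}\ge1$, since then the trivial bound $|h|_2+B=O(\log(1/\delta_1))^d$ already suffices — gives the stated $O(\log(\delta_1))^{d+1}(\delta_2+n\delta_1^{1/4})$. The main obstacle is precisely this last step: because the branching program can only track a rounded copy of $s$, it computes $g$ incorrectly near $\partial I$, and ruling out that failure under the \emph{pseudorandom} distribution — where no anticoncentration is available a priori — is what forces the auxiliary application of the branching-program generator to the interval indicator $\mathbf 1_{s(y)\in J}$.
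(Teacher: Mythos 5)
Your proof follows essentially the same two-stage strategy as the paper: first couple $X$ coordinatewise to an honest family $Y'$ of independent $\delta_1$-approximate Gaussians and use Carbery--Wright anticoncentration of $s(X)$ to control the boundary event, then argue that a read-once branching program with $O((r+d)\log(n/\delta_1))$ memory can track approximate values of the $r$ linear forms and the Hermite factors, so Theorem~\ref{ROBPThm}, applied to the level sets $\{\tilde F\ge v\}$ and integrated over $|v|\le B$, controls the passage from $Y'$ to $Y$. The one place you are noticeably more careful than the paper is the observation that replacing the branching-program output $\tilde F$ by $g\cdot h$ under the \emph{pseudorandom} distribution requires anticoncentration of $s(Y)$ itself; the paper's write-up only invokes anticoncentration of $s(Y')$, and your bootstrap --- applying the ROBP generator once more to the rounded interval indicator $\mathbf{1}_{\hat s\in J}$ around $\partial I$, then transferring to $Y'$ and finally to $X$ --- is the clean way to close that step. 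This is a refinement of detail rather than a different route, and the error accounting (the $B\delta_2$, $B\sqrt n\,\delta_1^{1/2}$, and $B\sqrt{n\log(1/\delta_1)\delta_1}$ contributions, together with the observation that the claim is vacuous once $n\delta_1^{1/4}\ge 1$) correctly assembles to the stated $O(\log(\delta_1))^{d+1}(\delta_2+n\delta_1^{1/4})$.
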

\begin{proof}
We prove this in two steps. First, show that for $Y'$ a family of $n$ independent approximate Gaussians that $\E[g(X)h(X)] \approx \E[g(Y')h(Y')]$. This is because by correlating $X$ and $Y'$ appropriately, we can guarantee that $X$ and $Y'$ are close with high probability. This will mean that $g(X)=g(Y')$ with high probability that that $h(X)\approx h(Y')$ with high probability. Next, we will need to show that $\E[g(Y')h(Y')]\approx \E[g(Y)h(Y)]$. This will hold because we can construct a read once branching program of small memory that computes approximations to the linear functions upon which $s$ depends and the values of the (at most $d$) coordinates upon which $h$ depends.

We may assume that $|s|_2=1$. We begin by letting $Y'$ be a family of independent $\delta_1$-approximate Gaussians. We can pick correlated copies of $X$ and $Y'$ so that with probability at least $1-n\delta_1$ each coordinate of $X$ is within $\delta_1$ of the corresponding coordinate of $Y'$. If this is the case, then $|s(X)-s(Y')| = O(n\log(\delta_1) \delta_1)$. By Lemma \ref{anticoncentrationLem}, $s(X)$ is only within this distance of an endpoint of $I$ with probability $O(n^{1/2}\delta_1^{1/2}\log^d(\delta_1))$, thus except for this probability, $g(X)=g(Y')$. Therefore, by Cauchy-Schwartz, the contribution to
$
\E[|g(X)h(X)-g(Y')h(Y')|]
$ coming from times when $g(X)\neq g(Y')$, or when some coordinate of $X$ and $Y'$ differ by more than $\delta_1$ is
$$
O((n^{1/4}\delta_1^{1/4}\log(\delta_1))\sqrt{\E[h(X)^2+h(Y')^2]}) = O(n^{1/4}\delta_1^{1/4}\log^{d+1}(\delta_1)).
$$
On the other hand
$
\E[|h(X)-h(Y')|]
$ when $X$ and $Y'$ agree to within $\delta_1$ in each coordinate is $O(n\log^d(\delta_1) \delta_1)$.
Thus,
$$
\left| \E[g(X)h(X)] - \E[g(Y')h(Y')] \right| \leq O(\log^{d+1}(\delta_1)n\delta_1^{1/4}).
$$

We now need to show that seeding $Y'$ by a read once branching programs with $M$ memory fools this expectation to within small error. Notice that a read once branching program with $O((d+r)\log(n/\delta_1))$ memory can keep track of an approximation to within $n^{-1}\delta_1^3$ of each of the $r$ normalized linear functions that $s$ depends on, and compute $h$ to precision $\delta_1$. The latter is accomplished by writing $h$ as $\prod_{i=1}^n h_{a_i}(x_i)$ and keeping track of a running product $\prod_{i=1}^m h_{a_i}(x_i)$ to relative precision $\delta_1O(\log(\delta_1))^{-d}(m/n)$. This allows the program to compute the value of $s$ to within $\delta_1$ and the value of $h$ exactly.

Thus, the probability that $h(Y')g(Y')\geq c$ is at most the probability that $h(Y)g(Y)\geq c-\delta_1$ plus the probability that $s(Y')$ is within $\delta_1$ of an endpoint of $I$ plus $\delta_2$. Note that except for an event of probability $n\delta_1$, $s(X)$ and $s(Y')$ differ by at most $O(n\log(\delta_1) \delta_1)$ and the former is this close to an endpoint of $I$ with probability at most $O(\log(\delta_1) \sqrt{n\delta_1})$. Thus, with probability $1-O(\log(\delta_1) \sqrt{n\delta_1}+n\delta_1)$, $s(Y')$ is not within $\delta_1$ of a boundary of $I$. Thus for any $c$,
$$
 \pr(h(Y)g(Y)\geq c) \leq \pr(h(Y')g(Y') \geq c - \delta_1)+ O(\delta_2+\log(\delta_1)n^{1/2}\delta_1^{1/2}+n\delta_1).
$$
Integrating this over all $|c|\leq O(\log(\delta_1))^d$ (which is the full range of values of $h(Y')$ and $h(Y)$), we find that
$$
\E[g(Y)h(Y)] \leq  \E[g(Y')h(Y')] + \delta_1 + O(\log(\delta_1))^{d+1}(\delta_2+n\delta_1^{1/2}).
$$
The lower bound follows similarly, and this completes the proof.
\end{proof}

\begin{proof}[Proof of Proposition \ref{indPolyProp}]
Note that we can write $q$ as a linear combination of degree $d$ hermite polynomials, where the sum of the absolute values of the coefficients is at most $O(n^d|q|_2)$. Our result follows from applying Lemma \ref{indHermlem} to each term separately.
\end{proof}

We also note the following corollary when $r=0$:
\begin{cor}\label{foolPolyCor}
Let $X$ and $Y$ be as in Proposition \ref{indPolyProp}. Let $q$ be a polynomial of degree at most $d$ then
$$
\left| \E[q(X)] - \E[q(Y)] \right| \leq O(\log(\delta_1))^{d+1}(\delta_2+n\delta_1^{1/4})n^{d}|q|_2.
$$
\end{cor}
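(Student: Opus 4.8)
\textbf{Proof proposal for Corollary \ref{foolPolyCor}.}

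The plan is to deduce this immediately from Proposition \ref{indPolyProp} by taking the degenerate case $r=0$. First I would choose $s$ to be a constant polynomial, say $s\equiv 0$; this is a quadratic polynomial whose value depends on zero linear polynomials, so the hypotheses of Proposition \ref{indPolyProp} are met with $r=0$. Then the memory bound $M=O((d+r)\log(n/\delta_1))$ demanded there of the read once branching program seeding $Y$ collapses to $M=O(d\log(n/\delta_1))$, which is exactly the hypothesis imposed on $Y$ in the present corollary, so no adjustment of parameters is needed. Finally I would pick the interval $I$ so that it contains the value of $s$ (e.g. any interval containing $0$), which forces the indicator $g(x)$ of the event $\{s(x)\in I\}$ to be identically $1$.

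With this choice, $\E[g(X)q(X)]=\E[q(X)]$ and $\E[g(Y)q(Y)]=\E[q(Y)]$, and the conclusion of Proposition \ref{indPolyProp} reads
$$
\left| \E[q(X)] - \E[q(Y)] \right| \leq O(\log(\delta_1))^{d+1}(\delta_2+n\delta_1^{1/4})n^{d}|q|_2,
$$
which is precisely the claimed bound. (If one prefers a self-contained argument, one can instead re-run the proof of Lemma \ref{indHermlem} and of Proposition \ref{indPolyProp} with the factor $g$ deleted throughout: the first step still gives $\E[h(X)]\approx\E[h(Y')]$ for $Y'$ a family of independent $\delta_1$-approximate Gaussians because $X$ and $Y'$ can be coupled to agree coordinatewise to within $\delta_1$ off a small event, and the second step still goes through because a branching program of memory $O(d\log(n/\delta_1))$ can compute a degree-$d$ Hermite polynomial via a running product of one-variable factors.)

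There is essentially no obstacle here: all of the analytic content already lives in Lemma \ref{indHermlem} and Proposition \ref{indPolyProp}, and what remains is only to observe that the construction of $Y$ becomes no harder — indeed strictly easier — when $r=0$, since the branching program need only track the (at most $d$) coordinates on which a degree-$d$ Hermite polynomial depends and no longer needs to track any linear forms. Hence the corollary follows with the same seed length and the same error bound, with $r$ set to $0$.
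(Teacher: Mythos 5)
Your proof matches the paper's intent exactly: the paper introduces the corollary with the phrase ``when $r=0$,'' which is precisely the specialization you describe — take $s$ constant, $I$ containing its value, so that $g\equiv 1$, and apply Proposition~\ref{indPolyProp} with $r=0$. Nothing more is needed, and your observation that the branching program only needs to track the coordinates on which the Hermite factors depend is consistent with the proof of Lemma~\ref{indHermlem}.
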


\section{The Key Result}

Our analysis will depend heavily upon the following Proposition:

\begin{prop}\label{oneStepProp}
Let $\delta>0$ and $n$ a positive integer. Let $C$ be a sufficiently large constant, and let $Y$ be a family of $n$ $\exp(-C\delta^{-1}\log(n/\delta))$-approximate Gaussians seeded by a pseudorandom generator that fools read once branching programs of memory $C\delta^{-2}\log(n/\delta)$ to within error $\exp(-C\delta^{-1}\log(n/\delta))$. Let $X$ be an $n$ dimensional standard Gaussian. Then for any degree $2$ polynomial threshold function $f$ in $n$ variables, we have that
$$
\left|\E[f(X)] - \E[f(\sqrt{1-\delta^3}X+\delta^{3/2} Y)] \right| = \exp(-\Omega(\delta^{-1})).
$$
\end{prop}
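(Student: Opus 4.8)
The plan is to reduce to a comparison against a genuine Gaussian and then exploit the degree-$2$ structure. Write $f=\sgn(p)$ with $|p|_2=1$, let $A$ be the matrix of $p$'s quadratic part, and let $X'$ be a standard Gaussian independent of everything else. Since $\sqrt{1-\delta^3}X+\delta^{3/2}X'$ is exactly a standard Gaussian, $\E[f(X)]=\E[f(\sqrt{1-\delta^3}X+\delta^{3/2}X')]$, so after conditioning on $x_0:=\sqrt{1-\delta^3}X$ it suffices to bound $\big|\E_Y[\sgn(\phi(Y))]-\E_{X'}[\sgn(\phi(X'))]\big|$ uniformly over typical $x_0$, where $\phi(y):=p(x_0+\delta^{3/2}y)$, and then integrate over $x_0$. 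Conditioning pays off because $\phi$ is a quadratic in $y$ whose quadratic part $\delta^3 y^\top Ay$ has operator norm at most $\delta^3$ while its linear and constant parts are $O(1)$; that is, $\phi$ is approximately linear in $y$.

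Next I would decompose using the eigenstructure of $A$: split $A=A_1+A_2$ with $A_1$ the restriction to eigenvalues exceeding $\tau:=\Theta(\sqrt\delta)$ (so $\mathrm{rank}\,A_1=O(\tau^{-2})=O(\delta^{-1})$) and $\|A_2\|_{\mathrm{op}}\le\tau$. This writes $\phi=s+w$, where $s(y):=p(x_0)+\delta^{3/2}\nabla p(x_0)\cdot y+\delta^3y^\top A_1y+\delta^3\,\mathrm{tr}\,A_2$ depends on only $r=O(\delta^{-1})$ linear forms of $y$ (the top eigendirections of $A$ and the direction of $\nabla p(x_0)$), and $w(y):=\delta^3(y^\top A_2y-\mathrm{tr}\,A_2)$ is a centered quadratic with $|w|_2=O(\delta^3\|A_2\|_{HS})=O(\delta^3)$; by Corollary \ref{ConcentrationCor} (and Corollary \ref{foolPolyCor} applied to $w^2$ on the pseudorandom side) $|w|\le\delta^3\,\polylog(n/\delta)$ except on an event of probability $\exp(-\polylog(n/\delta))$. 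Thus $\sgn(\phi)$ is, up to the tiny correction $w$, a threshold of the quadratic $s$ in only $r=O(\delta^{-1})$ linear forms — which is precisely what the branching-program memory budget $C\delta^{-2}\log(n/\delta)$ in the hypothesis can accommodate in Proposition \ref{indPolyProp}, even with polynomial degree $d$ up to $O(\delta^{-1})$.

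The core step is to compare $\E_Y[\sgn(s(Y)+w(Y))]$ with the analogous Gaussian expectation. Replacing $\sgn$ by a smooth mollification $\sgn_\eta$ at a scale $\eta$ equal to a small fixed power of $\delta$ (so that $\eta\gg|w|$) and Taylor-expanding in $w$ to order $k=\Theta(\delta^{-1})$,
$$\sgn_\eta(s+w)=\sum_{j=0}^{k-1}\frac{\sgn_\eta^{(j)}(s)}{j!}\,w^{\,j}+R_k,$$
each term $\sgn_\eta^{(j)}(s)\,w^j$ is a bounded function of $s$ — hence of the super-level sets of a quadratic in $r$ linear forms — times a polynomial of degree $2j$. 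Writing $\sgn_\eta^{(j)}(s)$ as an integral of indicators $\mathbf 1[s>t]$ and applying Proposition \ref{indPolyProp} with $q=w^j/j!$ to each shows $Y$ and $X'$ agree on its expectation up to an error that, although it carries factors like $|\log\delta_1|^{O(j)}$, $n^{O(j)}$ and $\eta^{-j}$, is dominated by the generator quality $\delta_2+n\delta_1^{1/4}=\exp(-\Omega(C\delta^{-1}\log(n/\delta)))$; summing over $j<k$ and choosing $C$ large enough keeps the total $\exp(-\Omega(\delta^{-1}))$. The remainder satisfies $\E|R_k|\le\|\sgn_\eta^{(k)}\|_\infty\,|w|_k^{\,k}/k!$, which by hypercontractivity (Lemma \ref{hypercontractiveLem}) controlling $|w|_k$ and the choice of $\eta$ is $\big(O(\delta^{\Omega(1)})\big)^{k}=\exp(-\Omega(\delta^{-1}\log(1/\delta)))$.

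The step I expect to be the main obstacle is the leftover mollification error, $\E_Y[(\sgn-\sgn_\eta)(\phi(Y))]-\E_{X'}[(\sgn-\sgn_\eta)(\phi(X'))]$: bounding either term crudely gives only $O(\sqrt\eta)$ via the Carbery--Wright anticoncentration of $p$ (Lemma \ref{anticoncentrationLem}), which is merely polynomially small in $\delta$ rather than $\exp(-\Omega(\delta^{-1}))$. The feature special to degree $2$ that rescues this is that on the dangerous event $\{|\phi(Y)|\lesssim\eta\}$ the nonlinearity of $p$ transverse to the $O(\delta^{-1})$ large eigendirections is only $O(\delta^3)$, so $(\sgn-\sgn_\eta)(\phi(Y))$, like $\sgn(\phi(Y))$ itself, again decouples — up to the $O(\delta^3)$ correction $w$, handled by one more expansion in $w$ — as a bounded function of the quadratic $s$ in $r$ linear forms, and can therefore be \emph{equalized} between $Y$ and $X'$ using Proposition \ref{indPolyProp} rather than merely being bounded by its (polynomially small) measure. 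Arranging the bookkeeping so that every error term is either exactly zero, controlled by Proposition \ref{indPolyProp} or Corollary \ref{foolPolyCor}, or of size $\exp(-\Omega(\delta^{-1}))$ — and in particular never paying an anticoncentration loss at scale $\poly(\delta)$ — is the technical heart, and is exactly the place where the explicit degree-$2$ decomposition (as opposed to the exponentially large one needed for $d>2$) is indispensable; integrating the resulting uniform-in-$x_0$ bound finishes the proof.
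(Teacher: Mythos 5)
Your high-level idea (condition on the dominant Gaussian, reduce to comparing a near-linear quadratic at $Y$ vs.\ at a Gaussian, Taylor-expand in the small quadratic remainder, and invoke Proposition~\ref{indPolyProp} termwise) is in the same spirit as the paper, but there is a structural difference that opens a gap you identify but do not actually close.

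The paper does \emph{not} condition on the entire $\sqrt{1-\delta^3}X$. It factors $\sqrt{1-\delta^3}X+\delta^{3/2}Y=\sqrt{1-\delta}X_1+\sqrt\delta\bigl(\sqrt{1-\delta^2}X_2+\delta Y\bigr)$, conditions only on $X_1$ (using Lemma~\ref{approximateLinearReductionLem}, a \emph{random} restriction, to get $(r,O(\sqrt\delta))$-approximate linearity, rather than your deterministic eigenvalue cut), and then in Lemma~\ref{approxLinLem} compares $f^{(X_1)}(\sqrt{1-\delta^2}X_2+\delta Y)$ with $f^{(X_1)}(X)$, where a full Gaussian is still present on both sides. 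The crucial move is averaging over the one-dimensional component $(X_2)_v$ in the approximately-linear direction $v$: this turns $\sgn$ into $\erf(R)$ with $R$ as in Equation~\eqref{rEqn}, a genuinely \emph{analytic} function of the small perturbations $q_0,q_1''$. The sign discontinuity disappears entirely; there is no mollification scale $\eta$ and no residual $(\sgn-\sgn_\eta)$ term to control. That analyticity is exactly what makes the degree-$2k$ Taylor expansion with $k=\Theta(\delta^{-1})$ legitimate and yields $\exp(-\Omega(\delta^{-1}))$.

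In your version you condition on all of $x_0=\sqrt{1-\delta^3}X$, leaving $\phi(y)=p(x_0+\delta^{3/2}y)$ with no residual Gaussian to smooth the sign, and you try to recover smoothness artificially via $\sgn_\eta$. You correctly flag the resulting mollification error as the crux, but the fix you sketch does not work. The function $(\sgn-\sgn_\eta)(s+w)$ is not differentiable in $w$ because $\sgn$ has a jump at $0$; there is no ``one more expansion in $w$'' available. Nor can one ``equalize'' this error via Proposition~\ref{indPolyProp}: that proposition handles an \emph{indicator of $s$} (computable by the branching program because $s$ depends on only $r$ linear forms) times a polynomial, but here the indicator $\mathbf 1[\,|s+w|\le\eta\,]$ and the value of $\sgn(s+w)$ on that event genuinely depend on $w$, which is a full-rank quadratic in all $n$ coordinates and is not computable by a read-once program of memory $O(\delta^{-2}\log(n/\delta))$. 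The only tool that applies is Carbery--Wright at scale $|w|\approx\delta^3$, which gives $O(\delta^{3/2}\polylog)$ — polynomial, not $\exp(-\Omega(\delta^{-1}))$. So the argument as written bottoms out at a polynomial error bound and does not prove the proposition. To repair it you would need to reintroduce a residual Gaussian direction to average over before conditioning — which is precisely the two-stage restriction the paper uses.
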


We first will need to show that this result holds for a certain class of quadratic polynomials. In particular, we define:

\begin{defn}
A degree $2$ polynomial $p:\R^n\rightarrow \R$ is called \emph{$(r,\delta)$-approximately linear} if it can be written in the form
$$
p(x) = p_0(x\cdot v_1,\ldots,x\cdot v_r) + x\cdot v + q(x)
$$
for some vectors $v_1,\ldots,v_k,v$ with $v$ orthogonal to $v_i$, and some degree-$2$ polynomials $p_0$ and $q$ so that $|q|_2 < \delta |v|_2$.
\end{defn}

We now show an analogue of Proposition \ref{oneStepProp} for approximately linear polynomials:

\begin{lem}\label{approxLinLem}
Let $k,r>0$ be integers and $\delta,\delta_1, \delta_2>0$ real numbers. Let $p$ be an $(r,\sqrt\delta)$-approximately linear polynomial in $n$ variables with $f$ the corresponding threshold function. Let $X$ be an $n$-dimensional standard Gaussian, and $Y$ a family on $n$ $\delta_1$-approximate Gaussians seeded by a PRG that fools read once branching programs of length $n$ and memory $M=C(k+r)\log(n/(\delta\delta_1\delta_2))$, for sufficiently large $C$, to error at most $\delta_2$. Then
$$
\left|\E[f(X)] - \E[f(\sqrt{1-\delta^2}X+\delta Y)] \right|$$ is at most
$$
 \leq \exp(-\Omega(\delta^{-1}))4^k+ O(\log^{5k}(\delta_1)(\delta_2+n\delta_1^{1/4}))O(nk)^{4k}+O(\delta k)^{2k}+O(2^{-k/2}).
$$
\end{lem}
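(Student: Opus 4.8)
The plan is to reduce to fooling the local behaviour of $f$ near a random Gaussian point, use approximate linearity to write that local picture as a function of few linear forms plus a perturbation that is small relative to the natural scale, and then approximate the sign by a truncated Taylor expansion whose individual terms Proposition~\ref{indPolyProp} can handle.

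After rescaling so that $|v|_2 = 1$ and an orthogonal change of variables, write $p(x) = p_0(x_1,\dots,x_r) + x_{r+1} + q(x)$ with $|q|_2 < \sqrt\delta$. Introduce an independent standard Gaussian $Z$; since $\sqrt{1-\delta^2}X + \delta Z$ is standard Gaussian, $\E[f(X)] = \E_{X,Z}[f(\sqrt{1-\delta^2}X+\delta Z)]$, so it suffices to bound $\E_X\big|\E_Z[\sgn(p_X(Z))] - \E_Y[\sgn(p_X(Y))]\big|$, where $p_x(y) := p(\sqrt{1-\delta^2}x+\delta y)$. Expanding $p$ around $\sqrt{1-\delta^2}x$ gives $p_x(y) = A_x(y) + B_x(y)$ with $A_x(y) = c(x) + \delta\dotp{\nabla p_0}{y} + \delta y_{r+1} + \delta^2 P_{p_0}(y)$ depending on $y$ only through the $r+1$ linear forms $y_1,\dots,y_{r+1}$, affine in $y_{r+1}$ with slope exactly $\delta$, and $B_x(y) = \delta\dotp{\nabla q(\sqrt{1-\delta^2}x)}{y} + \delta^2 P_q(y)$, where $P_{p_0},P_q$ are the quadratic parts of $p_0,q$. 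The structural consequence of $(r,\sqrt\delta)$-approximate linearity is that $B_x$ is genuinely small compared with the variation scale $\delta$ of $A_x$: since $\E_X[|\nabla q(\sqrt{1-\delta^2}X)|^2] = O(|q|_2^2) = O(\delta)$, Corollary~\ref{ConcentrationCor} shows that off a set $\mathcal B$ of probability $\exp(-\Omega(\delta^{-1})) + O(2^{-k/2})$ (the two bounds being used in different ranges of $k$) one has $|\nabla q(\sqrt{1-\delta^2}X)| \le \delta k$, so $\|B_x\|_{L^2(y)} = O(\delta^2 k)$ on $\mathcal G := \mathcal B^c$; on $\mathcal B$ we bound the integrand trivially by $2$.

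Fix $x\in\mathcal G$. Conditionally on $y_1,\dots,y_r,y_{r+2},\dots$, $A_x$ is a Gaussian of standard deviation $\delta$ in the remaining variable, so $t\mapsto\E[\sgn(A_x+t)]$ is smooth with $m$-th derivative $O(\delta^{-m}\sqrt{m!})$, and Taylor-expanding at $t=0$ to order $2k$ produces
$$
\E_Y[\sgn(p_x(Y))] = \E_Y[\sgn(A_x(Y))] + \sum_{j=1}^{2k-1} c_j\,\E_Y\big[B_x(Y)^{\,j}\,\partial^{\,j-1}\delta_0(A_x(Y))\big] + R_k,
$$
and likewise for $Z$; the Lagrange remainder is $|R_k| \lesssim \delta^{-2k}(2k)!^{-1/2}\E_Y[|B_x(Y)|^{2k}]$, and since $B_x$ is essentially linear, $\E[|B_x|^{2k}]\lesssim(2k)^k\|B_x\|_2^{2k}$, giving $|R_k| = O(\delta k)^{2k}$ at the above cutoff (same bound for $Z$). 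Replace each $\partial^{\,j-1}\delta_0(A_x)$ by an $\eta$-scale finite difference of indicators $\mathbf 1[A_x(y)\in I]$ of length-$\eta$ intervals; the finite-difference error is controlled by the density-derivative bound for $A_x$ and is absorbed by taking $\eta$ polynomially small. Every resulting term — together with the leading term $\sgn(A_x)=2\,\mathbf 1[A_x>0]-1$ — has the form $\E[B_x(y)^{\,j}\mathbf 1[A_x(y)\in I]]$ with $B_x^{\,j}$ of degree $2j\le 4k$ and $A_x$ a quadratic in $r+1$ linear forms, so Proposition~\ref{indPolyProp} (with $M = O((k+r)\log(n/(\delta\delta_1\delta_2)))$) bounds the difference of its values at $Y$ and at $Z$ by $O(\log\delta_1)^{2j+1}(\delta_2+n\delta_1^{1/4})\,n^{2j}\,|B_x^{\,j}|_2$. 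Summing the $O(4^k)$ resulting terms, estimating $|B_x^{\,j}|_2 \le O(j)^j\|B_x\|_2^j$ and absorbing the $O(\eta^{-j})$ coefficients, produces the summand $O(\log^{5k}(\delta_1)(\delta_2+n\delta_1^{1/4}))\,O(nk)^{4k}$. Assembling the four contributions — the bad set (amplified by the $O(4^k)$ terms of the expansion) giving $\exp(-\Omega(\delta^{-1}))4^k$, the cutoff tail $O(2^{-k/2})$, the Taylor remainder $O(\delta k)^{2k}$, and the pseudorandomness error — yields the claim.

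The main obstacle is the quantitative Taylor estimate: showing the order-$2k$ remainder of $\sgn$ is negligible works only because the two relevant scales are separated. The $y$-variation of $A_x$ is exactly $\delta$ — this is where the clean linear term $\delta y_{r+1}$ supplied by the approximate-linearity hypothesis is essential, since it makes the conditional law of $A_x$ a $\delta$-scaled Gaussian whose derivatives are bounded — whereas $\|B_x\|_2$ is forced down to $O(\delta^2 k)\ll\delta$ precisely because $|q|_2<\sqrt\delta$ makes $\nabla q$ small off an $\exp(-\Omega(\delta^{-1}))$-event. One must then choose the truncation order, the finite-difference resolution $\eta$, and the cutoff for $|\nabla q|$ consistently so that all four error terms stay within target, and carefully bound the $L^2$ norms of the polynomials $B_x^{\,j}$ entering Proposition~\ref{indPolyProp}; this bookkeeping, together with making the interpretation of the distributional derivatives via finite differences precise, is the bulk of the work.
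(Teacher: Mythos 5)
Your overall strategy resembles the paper's at a high level: smooth the sign function via some Gaussian averaging, Taylor-expand, turn the resulting terms into products of indicators and low-degree polynomials, and invoke Proposition~\ref{indPolyProp}. But the choice of smoothing variable is wrong in a way that breaks the argument.

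You smooth over $y_{r+1}$, the coordinate of $Y$ (respectively $Z$) picking up the clean $\delta y_{r+1}$ term in $A_x$, and you justify the $O(\delta^{-m}\sqrt{m!})$ derivative bound for $t\mapsto\E[\sgn(A_x+t)]$ by appealing to the conditional Gaussianity of that coordinate. This is fine when the outer expectation is over $Z$, but you then write the same expansion for $\E_Y[\sgn(p_x(Y))]$, ``and likewise for $Z$'' --- and for $Y$ it simply does not hold. The pseudorandom $Y$ is supported on a finite set; conditional on the other coordinates, $Y_{r+1}$ has no smooth density, so the conditional law of $A_x(Y)$ can be atomic. There is no uniform bound on the ``density derivatives'' of $A_x(Y)$, so neither the distributional Taylor expansion nor your finite-difference repair of it is controlled for $Y$. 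You cannot get such control without some anticoncentration for $A_x(Y)$, which is essentially what the lemma is trying to prove --- the argument is circular. A second, independent issue is that $B_x(y)$ genuinely depends on $y_{r+1}$ (both $\nabla q$ and the Hessian of $q$ can have components in the $v$ direction), so even in the Gaussian case you cannot simply substitute $B_x$ for the external parameter $t$ in $F(t) = \E_{y_{r+1}}[\sgn(A_x + t)]$; the perturbation and the smoothing variable are coupled.

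The paper's proof sidesteps both problems by doing the smoothing over $X_v$, the $v$-component of the \emph{genuine} Gaussian $X$ appearing in $\sqrt{1-\delta^2}X+\delta Y$. Writing $g_n^\pm(X',Y)=\E_{X_v}[f_n^\pm(\sqrt{1-\delta^2}X+\delta Y)]$, the $X_v$-average is over a true one-dimensional Gaussian for every fixed $X'$ and $Y$, and all the $Y$-dependence is pushed into the coefficients $q_0,q_1'',q_2$. This produces the $\erf(R)$ formula, whose Taylor expansion is uniformly justified, and only \emph{then} is the discrete $Y$ brought in --- at which point it only needs to fool indicator-times-polynomial functions, which is exactly what Proposition~\ref{indPolyProp} delivers. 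The paper also bins by the value of $p_0$ into intervals of width $1/N$ and sandwiches $f_n$ between $f_n^\pm$, using Carbery--Wright anticoncentration for the \emph{Gaussian} $X$ (not for $Y$) to control $\sum_n\E[f_n^+(X)-f_n^-(X)]$, again avoiding the circularity your approach runs into. If you replace your $y_{r+1}$-smoothing with $X_v$-smoothing and add the binning/sandwich step, you essentially recover the paper's argument; as written, the proof has a genuine gap.
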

The basic idea of the proof is as follows. First we bin based on the approximate value of $p_0$. We are reduced to considering the expectation of the threshold function of a polynomial $C+x\cdot v + q(x)$ times the indicator function of the event that $p_0$ (a polynomial depending on a bounded number of linear functions) lies in a small interval. To deal with the threshold function, we note that averaging over possible values of $X\cdot v$ smooths it out, and we may approximate it by its Taylor polynomial. Thus, we only need $Y$ to fool the expectation of an indicator function of $p_0$ lying in a small interval, times a low degree polynomial. This should hold by Proposition \ref{indPolyProp}.
\begin{proof}
Since $p$ is $(r,\delta)$-approximately linear, after rescaling we may assume that for some orthonormal set of vectors $v,v_1,\ldots,v_k$ that
$$
p(x)=p_0(x\cdot v_1,\ldots,x\cdot v_r) + x\cdot v + q(x)
$$
for some quadratic polynomials $p_0$ and $q$ with $|q|_2<\sqrt\delta.$ We may assume that $\delta \ll 1$, for otherwise there is nothing to prove.

Let $N=2^{k}/|p|_2$. Let $I_n(x) := \textbf{1}_{p_0(x)\in [n/N,(n+1)/N)}$ and let $f_n(x) :=  I_n(x)f(x)$. Let
$$f_n^+(x) = I_n(x) \sgn(x\cdot v + q(x) + (n+1)/N),\ \textrm{  and  } \ f_n^-(x) = I_n(x) \sgn(x\cdot v + q(x) + (n)/N).$$
 Note that $f(x)=\sum_{n\in \Z} f_n(x)$. Note also that $f_n^+(x) \geq f_n(x) \geq f_n^-(x)$ for all $x,n$. We note that $f^\pm_n(x)$ is actually a very close approximation to $f_n(x)$. In particular, by Lemma \ref{anticoncentrationLem} if $X$ is a random Gaussian then
$$
\sum_{n\in \Z}\E[f_n^+(X)-f_n^-(X)] \leq \pr(|p(X)| \leq 1/N) = O(2^{-k/2} ).
$$

Thus, it suffices to show that $f_n^{\pm}(X)$ and $f_n^{\pm}(\sqrt{1-\delta^2}X+\delta Y)$ have similar expectations for each $n$. To analyze this, let $X_v$ be the component of $X$ in the $v$ direction, and $X'$ be the component in the orthogonal directions. Let
\begin{align}
g_n^{\pm}(X',Y): & = \E_{X_v}[f_n^\pm(\sqrt{1-\delta^2}X+\delta Y)]\notag \\
& = I_n(X',Y)\E_{X_v}[\sgn(C(X') + q_0(X',Y) + X_v(1+q'_1(X')+q''_1(Y))) + X_v^2q_2)]\label{gEqn}
\end{align}
where $C(X')$ is a polynomial in $X'$ and $q_0,q'_1,q''_1$ and $q_2$ are polynomials (of degree at most 2,1,1 and 0 respectively) of $L^2$ norms at most $|q_0|_2=O( \delta),|q'_1|_2=O(\sqrt\delta),|q''_1|_2=O(\delta),|q_2|_2=O(\sqrt\delta) $. We may also assume that $q_0$ is at most linear in the variables of $X'$, and that if we write $q_0(X',Y)=\delta v\cot Y+q'_0(X',Y)$, then $|q'_0(X',Y)|_2=O(\delta^{3/2}).$ We claim that with probability $1-\exp(-\Omega(\delta^{-1}))$ over the choice of $X'$ the following hold:
\begin{enumerate}
\item $\E_Y[q_0(X',Y)^2] = O(\delta^2)$.
\item $|q'_1(X')| < 1/3$.
\end{enumerate}
The first holds by Corollary \ref{ConcentrationCor} since $\E_Y[q'_0(X',Y)^2]$ is a degree $2$ polynomial in $X'$ with $L^2$ norm $O(\delta^3)$. Thus, with the desired probability $\E_Y[q'_0(X',Y)^2]=O(\delta^2)$, which implies the desired bound. The second holds by Corollary \ref{ConcentrationCor} since $q'_1$ is a degree $1$ polynomial with $L^2$ norm $O(\sqrt{\delta})$. For the next part of the argument we will assume that we have fixed a value of $X'$ so that the above holds.

Let $q_1(X',Y):=q'_1(X')+q''_1(Y)$. Note that if $|q_0(X',Y)|,|q_1(X',Y)|<2/3$, then the polynomial $C+q_0 + x(1+q_1) + x^2 q_2$ cannot have more than one root with absolute value less than $\Omega(\delta^{-1/2})$. Since $X_v$ cannot be larger than this except with probability $\exp(-\Omega(\delta^{-1}))$, the expectation above is $\erf(R)+\exp(-\Omega(\delta^{-1}))$, where $R$ is the smaller root of that quadratic. Furthermore, there will be no such root $R$ unless $|C|\ll \delta^{-1/2}$. In such a case, by the quadratic formula, this root is
\begin{equation}\label{rEqn}
R = \frac{-1-q_1+\sqrt{1+2q_1+q_1^2-4q_2(C+q_0)}}{2q_2} = (1+q_1)\frac{\sqrt{1-4q_2(C+q_0)/(1+q_1)^2}-1}{2q_2} = \frac{C+q_0}{1+q_1}+O(1).
\end{equation}
Thus, in the range $|q_0|,|q_1|<2/3$ and $|C|\ll\delta^{-1/2}$ we have that the expectation in \eqref{gEqn} is
$$
\erf(R)+\exp(-\Omega(\delta^{-1})).
$$
Note that even for complex values of $q_0$ and $q_1$ with absolute value at most $2/3$, the $\erf(R)$ (with $R$ given by Equation \eqref{rEqn}) is complex analytic with absolute value uniformly bounded. Therefore, by Taylor expanding about $q_0=0$ and $q_1=q_1'$, we can find a polynomial $P$ of degree at most $2k$ (depending on $q$, $C$ and $X'$) so that the expectation in \eqref{gEqn} is given by
\begin{align*}
& P(q_0(X',Y),q_1(X',Y)-q'_1(X')) + O(q_0(X',Y))^{2k} + O(q_1(X',Y)-q_1'(Y))^{2k}\\= & P(q_0(X',Y),q''_1(Y)) + O(q_0(X',Y))^{2k} + O(q''_1(Y))^{2k}.
\end{align*}
Furthermore, the coefficients of $P$ are all $O(1)^k$. The above must hold when $|q_0|,|q''_1|$ are not at most $1/3$. On the other hand, this means that even when $|q_0|,|q''_1|$ are larger than $1/3$, we have that $P(q_0(X',Y),q''_1(X',Y))\pm 1 = O(q_0(X',Y))^{2k} + O(q_1(X',Y))^{2k}$. This means that the above formula holds for all values of $q_0$ and $q''_1$. Thus, $g_n^{\pm}(X',Y)$ is
\begin{equation*}\label{gfullEqn}
G(Y):=\textbf{1}_{s(Y)\in I}(P(q_0(X',Y),q''_1(Y)) + O(q_0(X',Y))^{2k} + O(q''_1(Y))^{2k}) + \exp(-\Omega(\delta^{-1}))
\end{equation*}
where $s$ is some quadratic that depends on at most $r$ linear functions, $I$ is an interval. Thus, $g(X',Y)$ will be approximately the product of an indicator function of something that depends on only a limited number linear functions of $Y$ and a polynomial of bounded degree. Our proposition will hold essentially because PRGs for read once branching programs fool such functions as show in Proposition \ref{indPolyProp}.

Note that $P(q_0(Y),q''_1(Y))$ can be written as a polynomial of degree at most $4k$ and $L^2$ norm at most $O(k)^{4k}$. Letting $G_0(y)$ be
$$
G_0(y):= \textbf{1}_{s(y)\in I}P(q_0(y),q''_1(y))
$$
we have by Proposition \ref{indPolyProp} that
$$
\left|\E[G_0(X)]-\E[G_0(Y)] \right|\leq O(\log^{5k}(\delta_1)(\delta_2+n\delta_1^{1/4}))O(nk)^{4k}.
$$
Similarly, if
$$
G_1(y):= \textbf{1}_{s(y)\in I}(q_0(y)^{2k}+q''_1(y)^{2k})
$$
then
$$
\left|\E[G_0(X)]-\E[G_0(Y)] \right|\leq O(\log^{5k}(\delta_1)(\delta_2+n\delta_1^{1/4}))O(nk)^{4k}.
$$
Also,
$$
\E[G_0(X)] \leq O(\delta k)^{2k}
$$
by Lemma \ref{hypercontractiveLem}. Therefore, we have that the difference in expectations between $g_n^\pm(X',Y)$ and $g_n^\pm(X',Z)$ where $Z$ is an independent standard Gaussian, is at most
$$
\exp(-\Omega(\delta^{-1}))+O(\log^{5k}(\delta_1)(\delta_2+n\delta_1^{1/4}))O(nk)^{4k}+O(\delta k)^{2k}.
$$
Thus,
$$
\left| \E[f_n^{\pm}(X)]-\E[f_n^{\pm}(\sqrt{1-\delta^2}X+\delta Y)]\right| \leq \exp(-\Omega(\delta^{-1}))+O(\log^{5k}(\delta_1)(\delta_2+n\delta_1^{1/4}))O(nk)^{4k}+O(\delta k)^{2k}.
$$
Therefore, we have that
\begin{align*}
\sum_{|n|\leq 4^{k}} & \left| \E[f_n(X)]-\E[f_n(\sqrt{1-\delta^2}X+\delta Y)]\right| \\
&\leq \exp(-\Omega(\delta^{-1}))4^k+O(\log^{5k}(\delta_1)(\delta_2+n\delta_1^{1/4}))O(nk)^{4k}\delta^{-k}+O(\delta k)^{k}+\sum_{n}\left|\E[f_n^+(X)-f_n^-(X)]\right|\\
& \leq \exp(-\Omega(\delta^{-1}))4^k+O(\log^{5k}(\delta_1)(\delta_2+n\delta_1^{1/4}))O(nk)^{4k}\delta^{-k}+O(\delta k)^{k} +O(2^{-k/2}).
\end{align*}
On the other hand,
$$
\sum_{|n|\geq 4^k} \left| \E[f_n(X)]-\E[f_n(\sqrt{1-\delta^2}X+\delta Y)]\right|
$$
is at most the probability that either $|p_0(X)|$ or $|p_0(\sqrt{1-\delta^2}X+\delta Y)|$ is more than $2^k$ times the $L^2$ norm of $p$, which is $O(2^{-k})$ by the Markov bound and Corollary \ref{foolPolyCor}.
Thus,
\begin{align*}
\left| \E[f(X)]-\E[f(\sqrt{1-\delta^2}X+\delta Y)]\right| & \leq \sum_{|n|\in \Z}  \left| \E[f_n(X)]-\E[f_n(\sqrt{1-\delta^2}X+\delta Y)]\right|\\
 \leq \exp(-\Omega(\delta^{-1}))4^k+ & O(\log^{5k}(\delta_1)(\delta_2+n\delta_1^{1/4}))O(nk)^{4k}+O(\delta k)^{2k}+O(2^{-k/2}).
\end{align*}
As desired.

\end{proof}

We would like to reduce Proposition \ref{oneStepProp} to this case. Fortunately, it can be shown that after an appropriate random restriction that any quadratic polynomial can be made to be approximately linear with high probability.

\begin{lem}\label{approximateLinearReductionLem}
Let $p$ be a degree $2$ polynomial, $\delta>0$ and $r$ a non-negative integer. Let $X$ be a Gaussian random variable and $p^{(X)}$ be the polynomial
$$
p^{(X)}(x) := p(\sqrt{1-\delta^2}X + \delta x).
$$
Then with probability at least $1-\exp(-\Omega(r))$ over the choice of $X$, $p^{(X)}$ is $(r,O(\delta))$-approximately linear.
\end{lem}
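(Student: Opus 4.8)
The plan is to diagonalize the quadratic part of $p$ and write out explicitly what the restriction does to it. Write $p(y)=y^{T}Ay+b^{T}y+c$ with $A$ symmetric, fix an orthonormal eigenbasis $u_1,\dots,u_n$ of $A$ ordered so that $|\lambda_1|\ge|\lambda_2|\ge\cdots$ for the corresponding eigenvalues $\lambda_j$, and expand $p^{(X)}(x)=p(\sqrt{1-\delta^2}X+\delta x)$ in powers of $x$. Collecting terms gives
$$p^{(X)}(x)= c_X +\delta\sum_j \beta_j\,(x\cdot u_j)+\delta^2\sum_j \lambda_j\,(x\cdot u_j)^2, \qquad \beta_j:=2\sqrt{1-\delta^2}\,\lambda_j\,(X\cdot u_j)+b\cdot u_j ,$$
where $c_X$ depends only on $X$; the linear coefficient vector is exactly $\delta\,\nabla p(\sqrt{1-\delta^2}X)$. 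Since the $X\cdot u_j$ are i.i.d.\ $N(0,1)$, the $\beta_j$ are independent, $\beta_j$ Gaussian with mean $b\cdot u_j$ and variance $4(1-\delta^2)\lambda_j^2\asymp\lambda_j^2$ (using $\delta\ll1$). In particular the linear part $b$ of $p$ only shifts the means of the $\beta_j$, and every estimate below will use only their variances.

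To realize $p^{(X)}$ as approximately linear I would pick a set $S$ of coordinates with $|S|\le r$, fold the terms $\delta\beta_j(x\cdot u_j)+\delta^2\lambda_j(x\cdot u_j)^2$ for $j\in S$ (together with the constant) into a degree-$2$ polynomial $p_0$ of the linear forms $\{x\cdot u_j\}_{j\in S}$, set $v:=\delta\sum_{j\notin S}\beta_j u_j$ (automatically orthogonal to each $u_j$ with $j\in S$), and set $q(x):=\delta^2\sum_{j\notin S}\lambda_j\big((x\cdot u_j)^2-1\big)$, absorbing the constant $\delta^2\sum_{j\notin S}\lambda_j$ into $p_0$. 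Because the $(x\cdot u_j)^2-1$ are orthogonal, $|q|_2=\sqrt2\,\delta^2\big(\sum_{j\notin S}\lambda_j^2\big)^{1/2}$ while $|v|_2=\delta\big(\sum_{j\notin S}\beta_j^2\big)^{1/2}$, so $p^{(X)}$ is $(r,O(\delta))$-approximately linear as soon as $\sum_{j\notin S}\beta_j^2\ge c\sum_{j\notin S}\lambda_j^2$ for a fixed $c>0$. I would choose $S$ greedily, deleting $j=1,2,\dots$ in the given order until this inequality holds; writing $T_m:=\sum_{j>m}\lambda_j^2$, the construction fails only if $\sum_{j>m}\beta_j^2<cT_m$ for every $m\in\{0,1,\dots,r\}$. (The only way $v$ can vanish is if $A$ has rank $\le r$ and $p^{(X)}$ is genuinely a function of $\le r$ linear forms; that degenerate case is handled directly.)

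So it remains to bound $\pr(\forall\,m\le r:\ \sum_{j>m}\beta_j^2<cT_m)$, and here I would split on the ``effective dimension'' $T_m^2/\sum_{j>m}\lambda_j^4$, which is always at least $T_m/\lambda_{m+1}^2$. If for some $m\le r$ this quantity is $\Omega(r)$, then $\sum_{j>m}\beta_j^2$ is a sum of many comparably-sized independent squares with mean $\ge 3T_m$, and a Chernoff lower-tail bound gives $\pr(\sum_{j>m}\beta_j^2<cT_m)\le e^{-\Omega(r)}$, which already bounds the whole bad event. Otherwise $T_m/\lambda_{m+1}^2$ is small for every $m\le r$, and I would instead exploit the independence of $\beta_1,\dots,\beta_{r+1}$: since $\beta_{m+1}^2\le\sum_{j>m}\beta_j^2$, the bad event is contained in $\bigcap_{m=0}^{r}\{\beta_{m+1}^2<cT_m\}$, whose probability factors as $\prod_{m=0}^{r}\pr(|\beta_{m+1}|<\sqrt{cT_m})$, and each factor is at most $O(\sqrt c)\,(T_m/\lambda_{m+1}^2)^{1/2}$ by the Gaussian density bound — hence a constant strictly below $1$ once $c$ is small relative to the (now bounded) ratios — so the product over $r+1$ factors is $e^{-\Omega(r)}$.

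The step requiring genuine care, and the main obstacle, is gluing these two regimes with one absolute constant $c$: the anti-concentration product contributes a factor $<1$ per step only when $c\lesssim(\max_{m\le r}T_m/\lambda_{m+1}^2)^{-1}$, whereas the Chernoff term reaches $e^{-\Omega(r)}$ only once that same maximal ratio is $\Omega(r)$, so one needs these ranges to overlap. If they do not overlap for the naive per-step estimate, the fix is to sharpen it — comparing $\sum_{j>m}\beta_j^2$ against a block $\beta_{m+1}^2+\cdots+\beta_{m+K}^2$ of comparably-sized eigenvalues rather than against only its largest term — so that any leftover loss is pushed into the implied constant in $O(\delta)$, which is harmless downstream since $r$ is only polylogarithmic in the application. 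The remaining bookkeeping — centering $q$ so that $|q|_2=\sqrt2\,\delta^2(\sum_{j\notin S}\lambda_j^2)^{1/2}$ exactly, and disposing of the rank-$\le r$ degenerate case — is routine.
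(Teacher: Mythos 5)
Your decomposition is the same one the paper uses: after diagonalizing, $p^{(X)}$ splits as $p_0$ (depending on $\le r$ coordinates), a linear form $v=\delta\sum_{j\notin S}\beta_j u_j$, and a residual quadratic $q=\delta^2\sum_{j\notin S}\lambda_j\bigl((x\cdot u_j)^2-1\bigr)$, and the entire lemma reduces to showing $\sum_{j\notin S}\beta_j^2\geq c\sum_{j\notin S}\lambda_j^2$ with high probability for some \emph{absolute} constant $c>0$. Where you diverge is in the choice of $S$ and the way the failure probability is bounded, and that is where the argument has a genuine gap that you yourself flag as ``the main obstacle.''

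Concretely, you take $S$ to be a prefix $\{1,\dots,m\}$ and want $\Pr\bigl(\forall\,m\le r:\sum_{j>m}\beta_j^2<cT_m\bigr)\le e^{-\Omega(r)}$, splitting on the effective dimension. But the two regimes do not overlap with a fixed $c$. Take $\lambda_j^2=\rho^j$ with $1-\rho\asymp r^{-1/2}$: then $T_m/\lambda_{m+1}^2=1/(1-\rho)\asymp\sqrt r$ for every $m$, so each anticoncentration factor $\Pr(|\beta_{m+1}|<\sqrt{cT_m})=O\bigl(\sqrt{cT_m/\lambda_{m+1}^2}\bigr)=O(\sqrt{c}\,r^{1/4})$ is not bounded below $1$ for any universal $c$; while the effective dimension is also $\asymp\sqrt r=o(r)$, so the lower-tail Chernoff estimate only gives $e^{-\Omega(\sqrt r)}$, not $e^{-\Omega(r)}$. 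Closing the gap by shrinking $c$ to $O(1/r)$ does not help: that turns the conclusion into $(r,O(\delta\sqrt r))$-approximately linear, and in the application $r=\delta^{-1}$ with the lemma invoked at scale $\sqrt\delta$, so you would get $(\delta^{-1},O(1))$ — useless, since Lemma \ref{approxLinLem} carries a term $O(\delta k)^{2k}$ that must actually be small. The block idea you sketch might repair the argument, but as written it is not carried out, and the ``absorb into the implied constant'' fallback is not harmless.

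The paper sidesteps all of this with a cleaner random variable. Instead of conditioning on the continuous quantities $\sum_{j>m}\beta_j^2$, it introduces indicator variables $I_i:=\mathbf 1\{|\beta_i|\ge c_0|\lambda_i|\}$, which are independent Bernoullis each with success probability $\ge 2/3$ (by a pointwise density bound, uniform over the mean shift $b\cdot u_i$). A single Chernoff estimate shows that with probability $1-e^{-\Omega(r)}$ every prefix satisfies $\sum_{i\le m}I_i\ge m/2-r$; one then takes $S$ to be the first $r$ indices with $I_i=0$ (not the first $r$ indices outright), and an Abel-summation argument, using that the $\lambda_i^2$ are sorted, upgrades the prefix bound to $\sum_{i\notin S}I_i\lambda_i^2\ge\tfrac12\sum_{i\notin S}\lambda_i^2$, hence $\sum_{i\notin S}\beta_i^2\ge\tfrac{c_0^2}{2}\sum_{i\notin S}\lambda_i^2$ with $c_0$ absolute. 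No case split on effective dimension is needed, and the constant in $O(\delta)$ is genuinely universal. If you want to salvage your write-up, the cheapest fix is to replace your greedy prefix $S$ with the paper's ``first $r$ failures'' $S$ and run the indicator/Abel-summation argument.
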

\begin{proof}
For any polynomial $q$, let $q^{(X)}$ be the polynomial
$$
q^{(X)}(x) := q(\sqrt{1-\delta^2}X + \delta x).
$$

After diagonalizing the quadratic part of $p$ and making an orthonormal change of variables we may write
$$
p(x) = \sum_{i=1}^n p_i(x_i)
$$
where $p_i$ is a quadratic polynomial in one variable. Furthermore, we may assume that the quadratic term of $p_i(x)$ is $a_i x^2$ with $|a_i|$ decreasing in $i$. Note that
$$
p^{(X)}(x) = \sum_{i=1}^n p_i^{(X_i)}(x_i).
$$
We may write $p_i^{(X_i)}(x)$ as $\delta^2\sqrt{2}a_ih_2(x) + C_{i,1}(X_i)x + C_{i,0}(X_i)$ where $h_2(x)=(x^2-1)/\sqrt{2}$ is the second Hermite polynomial, and $C_{i,1}$ and $C_{i,0}$ are appropriate constants depending on $X_i$. Note furthermore, that unless $X_i$ lies within a small constant of the global maximum or minimum of $p_i$ that $|C_{i,1}(X_i)| = \Omega(\delta |a_i|)$. Thus, with probability at least $2/3$, independently for each $i$, we have that $|C_{i,1}(X_i)| = \Omega(\delta |a_i|)$. Let $I_i$ be the indicator random variable for the event that this happens.

From this it is easy to show that with probability $1-\exp(-\Omega(r))$ we have that $\sum_{i=1}^m I_i \geq m/2 - r$ for all $m$ (in fact the expected number of $m$ for which this fails is exponentially small). We claim that if this occurs, then $p^{(X)}$ is $(r,O(\delta))$-approximately linear. To show this, let $S$ be the set of the $r$ smallest indices $i$ for which $I_i=0$. We may write
$$
p^{(X)}(x) = \left(\sum_{i\in S} p_i^{(X_i)}(x_i)+\sum_{i\not\in S} C_{i,0}(X_i)\right) + \left(\sum_{i\not\in S} C_{i,1}(X_i) e_i \right)\cdot X + \left( \sum_{i\not\in S} \delta^2 \sqrt{2}a_i h_2(x_i)\right).
$$
We claim that letting
$$
p_0(x) = \sum_{i\in S} p_i^{(X_i)}(x_i)+\sum_{i\not\in S} C_{i,0}(X_i), \ \ \ v = \sum_{i\not\in S} C_{i,1}(X_i) e_i, \ \ \ q(x) = \sum_{i\not\in S} \delta^2 \sqrt{2}a_i h_2(x_i)
$$
shows that $p^{(X)}$ is $(r,O(\delta))$-approximately linear.

It is clear that $p_0$ depends on only the $r$ linear functions $x\cdot e_i$ for $i\in S$, that $v$ is orthogonal to these $e_i$, and that $p^{(X)}$ is the sum of $p_0, x\cdot v$ and $q$. We have only to verify that $|q|_2 = O(\delta)|v|.$ It is clear that $|q|_2 = O(\delta^2)\sqrt{\sum_{i\not\in S} a_i^2}.$ On the other hand, we have that
$$
|v|_2 = \sqrt{\sum_{i\not\in S} C_{i,1}^2(X_i)} \geq \Omega\left(\delta\sqrt{\sum_{i\not\in S}I_ia_i^2} \right).
$$
Thus, it suffices to show that
$$
\sum_{i\not\in S} I_i a_i^2 \geq \frac{1}{2} \sum_{i\not\in S} a_i^2.
$$
We can show this by Abel summation. In particular, for $i\not\in S$ let $i'$ be the value of the next smallest integer not in $S$ and let $a_{n+1}=0$. We have that
$$
\sum_{i\not\in S} a_i^2 = \sum_{i\not\in S} \sum_{j\not\in S, j\geq i} a_j^2 - a_{j'}^2 = \sum_{j\not\in S} (a_j^2-a_{j'}^2)\left(\sum_{i\not\in S,i\leq j}1\right).
$$
Similarly,
$$
\sum_{i\not\in S} I_i a_i^2 = \sum_{i\not\in S} I_i \sum_{j\not\in S, j\geq i} I_i(a_j^2 - a_{j'}^2) = \sum_{j\not\in S} (a_j^2-a_{j'}^2)\left(\sum_{i\not\in S,i\leq j}I_i\right).
$$
On the other hand, for any $j$ we have that
$$
\sum_{i\not\in S,i\leq j} I_i \geq \frac{1}{2}\sum_{i\not\in S,i\leq j} 1.
$$
Substituting into the above we find that
$$
\sum_{i\not\in S} I_i a_i^2 \geq \frac{1}{2} \sum_{i\not\in S} a_i^2
$$
and our result follows.
\end{proof}

Proposition \ref{oneStepProp} now follows easily by using Lemma \ref{approximateLinearReductionLem} to reduce us to the case handled by Lemma \ref{approxLinLem}.

\begin{proof}
Let $f(x)=\sgn(p(x))$ for some degree $2$ polynomial $p$.

Let $X_1$ and $X_2$ be independent standard Gaussians. Note that
$$
\E[f(\sqrt{1-\delta^3}X+\delta^{3/2} Y)] = \E[f(\sqrt{1-\delta}X_1 + \sqrt\delta(\sqrt{1-\delta^2}X_2+\delta Y))].
$$
Let $p^{(X_1)}$ be the polynomial given by
$$
p^{(X_1)}(x):= p(\sqrt{1-\delta}X_1+\sqrt\delta x)
$$
and let $f^{(X_1)}(x) := \sgn(p^{(X_1)})(x)$. Note that
$$
\E[f(\sqrt{1-\delta^3}X+\delta^{3/2} Y)] = \E_{X_1}[\E_{X_2,Y}[f^{(X_1)}(\sqrt{1-\delta^2}X_2+\delta Y)]].
$$
By Lemma \ref{approximateLinearReductionLem}, we have with probability $1-\exp(-\Omega(\delta^{-1}))$ over the choice of $X_1$ that $p^{(X_1)}$ is $(\delta^{-1},O(\sqrt\delta))$-approximately linear. If this is the case, then by applying Lemma \ref{approxLinLem} with $k$ a sufficiently small multiple of $\delta^{-1}$, we find that
$$
\E_{X_2,Y}[f^{(X_1)}(\sqrt{1-\delta^2}X_2+\delta Y)] = \E[f^{(X_1)}(X)] + \exp(-\Omega(\delta^{-1})).
$$
Putting these together, we find that
\begin{align*}
\E[f(\sqrt{1-\delta^3}X+\delta^{3/2} Y)] & = \E_{X_1}[\E[f^{(X_1)}(X)]] + \exp(-\Omega(\delta^{-1}))\\
& = \E[f(\sqrt{1-\delta}X_1 + \sqrt\delta X)]+ \exp(-\Omega(\delta^{-1}))\\
& = \E[f(X)] + \exp(-\Omega(\delta^{-1})).
\end{align*}
\end{proof}

\section{Cleanup}

It is not difficult to complete the analysis of our generator given Proposition \ref{oneStepProp}. We begin by applying Proposition \ref{oneStepProp} iteratively to obtain:

\begin{lem}\label{severalStepsLem}
Let $\delta>0$ and $n,\ell$ be positive integers. Let $C$ be a sufficiently large constant. For $1\leq i \leq \ell$ let $Y_i$ be an independent copy of a family of $n$ $\exp(-C\delta^{-1}\log(n/\delta))$-approximate Gaussians seeded by a pseudorandom generator that fools read once branching programs of memory $C\delta^{-2}\log(n/\delta)$ to within error $\exp(-C\delta^{-1}\log(n/\delta))$. Let $X$ be an $n$ dimensional standard Gaussian. Then for any degree $2$ polynomial threshold function $f$ in $n$ variables, we have that
$$
\left|\E[f(X)] - \E\left[f\left((1-\delta^3)^{\ell/2}X+\delta^{3/2}\sum_{i=1}^\ell (1-\delta^3)^{(\ell-1)/2}Y_i\right)\right] \right| \leq \ell\exp(-\Omega(\delta^{-1})).
$$
\end{lem}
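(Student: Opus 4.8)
The plan is a standard hybrid argument: interpolate between $X$ and the target variable in $\ell$ steps, arrange each step to be a literal instance of Proposition \ref{oneStepProp} applied to an affine reparametrization of $f$, and sum the errors by the triangle inequality. There is no genuinely hard step; the care lies entirely in the bookkeeping.

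Concretely, for $0\le j\le \ell$ I would set
$$
W_j := (1-\delta^3)^{j/2}X + \delta^{3/2}\sum_{i=1}^{j}(1-\delta^3)^{(i-1)/2}Y_i,
$$
so that $W_0=X$ and $W_\ell = (1-\delta^3)^{\ell/2}X + \delta^{3/2}\sum_{i=1}^{\ell}(1-\delta^3)^{(i-1)/2}Y_i$. Since the $Y_i$ are i.i.d., the sum $\sum_{i=1}^{\ell}(1-\delta^3)^{(i-1)/2}Y_i$ has the same distribution as $\sum_{i=1}^{\ell}(1-\delta^3)^{(\ell-i)/2}Y_i$ after reindexing $i\mapsto \ell+1-i$, so $W_\ell$ is distributed exactly as the variable in the statement and it suffices to bound $|\E[f(W_0)]-\E[f(W_\ell)]|$. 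The identity that makes the induction work is
$$
W_{j+1} = (1-\delta^3)^{j/2}\bigl(\sqrt{1-\delta^3}\,X + \delta^{3/2}Y_{j+1}\bigr) + \delta^{3/2}\sum_{i=1}^{j}(1-\delta^3)^{(i-1)/2}Y_i,
$$
i.e. $W_{j+1}$ is obtained from $W_j$ by replacing the occurrence of $X$ inside it with $\sqrt{1-\delta^3}\,X+\delta^{3/2}Y_{j+1}$.

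Now fix $0\le j<\ell$ and condition on $Y_1,\ldots,Y_j$. Writing $f=\sgn(p)$ with $p$ of degree $2$, the function
$$
g(x) := f\!\left((1-\delta^3)^{j/2}x + \delta^{3/2}\textstyle\sum_{i=1}^{j}(1-\delta^3)^{(i-1)/2}Y_i\right)
$$
is the sign of a degree-$2$ polynomial in $x$ (an affine change of variables preserves degree), hence itself a degree-$2$ polynomial threshold function. Applying Proposition \ref{oneStepProp} to $g$ — whose hypotheses on the pseudorandom family are exactly those satisfied by $Y_{j+1}$ — gives $\bigl|\E_X[g(X)] - \E_{X,Y_{j+1}}[g(\sqrt{1-\delta^3}\,X+\delta^{3/2}Y_{j+1})]\bigr| = \exp(-\Omega(\delta^{-1}))$, which by the two displayed identities is precisely $|\E[f(W_j)]-\E[f(W_{j+1})]| = \exp(-\Omega(\delta^{-1}))$, still conditioned on $Y_1,\ldots,Y_j$. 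Because the bound in Proposition \ref{oneStepProp} is uniform over all degree-$2$ polynomial threshold functions, it is uniform over the conditioning, so taking the expectation over $Y_1,\ldots,Y_j$ yields $|\E[f(W_j)]-\E[f(W_{j+1})]| \le \exp(-\Omega(\delta^{-1}))$ unconditionally. Summing over $j=0,\ldots,\ell-1$ gives $|\E[f(W_0)]-\E[f(W_\ell)]|\le \ell\exp(-\Omega(\delta^{-1}))$, as claimed.

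The only points requiring attention, rather than an actual obstacle, are: (i) choosing the interpolation weights so that each one-step transition is exactly an instance of Proposition \ref{oneStepProp} after composing $f$ with an affine map; (ii) using the uniformity of that proposition's error estimate over degree-$2$ PTFs in order to average over the already-realized $Y_1,\ldots,Y_j$; and (iii) the harmless relabeling of the i.i.d.\ families $Y_i$ needed to reconcile the weights produced by the iteration with the exact form of the sum in the statement.
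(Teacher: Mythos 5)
Your proof is correct and is essentially the paper's proof: both telescope $\ell$ applications of Proposition \ref{oneStepProp}, each applied to the degree-$2$ PTF obtained by composing $f$ with the affine map that freezes the already-realized $Y_1,\ldots,Y_j$, and then sum the per-step errors. You additionally read the exponent $(1-\delta^3)^{(\ell-1)/2}$ in the statement as the evident typo it must be for $(1-\delta^3)^{(i-1)/2}$ (required for the telescoping identity to close and for the coefficient variances to sum to $1$), and you make explicit the conditioning-on-$Y_1,\ldots,Y_j$ and uniformity-over-PTFs point that the paper's inductive write-up leaves tacit.
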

\begin{proof}
The proof is by induction on $\ell$. The case of $\ell=0$ is trivial. Assuming that our Lemma holds for $\ell$, applying Proposition \ref{oneStepProp} to the threshold function
$$
g(x) := f\left((1-\delta^3)^{\ell/2}x+\delta^{3/2}\sum_{i=1}^\ell (1-\delta^3)^{(\ell-1)/2}Y_i\right),
$$
we find that
\begin{align*}
\E& \left[f\left((1-\delta^3)^{(\ell+1)/2}X+\delta^{3/2}\sum_{i=1}^{\ell+1} (1-\delta^4)^{(\ell-1)/2}Y_i\right)\right] \\ & \ \ \ \ \ \ \ \ \ \ =  \E\left[f\left((1-\delta^3)^{\ell/2}X+\delta^{3/2}\sum_{i=1}^\ell (1-\delta^4)^{(\ell-1)/2}Y_i\right)\right] + \exp(-\Omega(\delta^{-1}))\\
& \ \ \ \ \ \ \ \ \ \ = \E[f(X)] + (\ell+1)\exp(-\Omega(\delta^{-1})).
\end{align*}
This completes the proof.
\end{proof}

Next, we note that when $\ell$ is large, the coefficient of $X$ above is small enough that it should have negligible probability of affecting the sign of the polynomial in question.

\begin{lem}\label{FinalGeneratorLem}
Let $\delta>0$ and $n,\ell$ be positive integers. Let $C$ be a sufficiently large constant. For $1\leq i \leq \ell$ let $Y_i$ be an independent copy of a family of $n$ $\exp(-C\delta^{-1}\log(n/\delta))$-approximate Gaussians seeded by a pseudorandom generator that fools read once branching programs of memory $C\delta^{-2}\log(n/\delta)$ to within error $\exp(-C\delta^{-1}\log(n/\delta))$. Let $X$ be an $n$ dimensional standard Gaussian. Then for any degree $2$ polynomial threshold function $f$ in $n$ variables, we have that
$$
\left|\E[f(X)] - \E\left[f\left(\frac{\sum_{i=1}^\ell (1-\delta^3)^{(\ell-1)/2}Y_i}{\sqrt{\sum_{i=1}^\ell (1-\delta^3)^{\ell-1}}}\right)\right] \right| \leq \ell\exp(-\Omega(\delta^{-1})) + O((1-\delta^3)^{\ell/18}).
$$
\end{lem}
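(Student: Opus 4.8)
\emph{Plan.} I would use Lemma~\ref{severalStepsLem} to reduce the problem to deleting one tiny, genuinely Gaussian summand, and then argue that this deletion almost never flips the sign of $p$. Write $\alpha:=(1-\delta^3)^{\ell/2}$ and $Z:=\alpha X+\sqrt{1-\alpha^2}\,Y$; the weights defining $Y$ are arranged precisely so that $\sqrt{1-\alpha^2}\,Y$ is the pseudorandom part appearing in Lemma~\ref{severalStepsLem}, whence $|\E[f(X)]-\E[f(Z)]|\le \ell\exp(-\Omega(\delta^{-1}))$. So it suffices to bound $|\E[f(Z)]-\E[f(Y)]|$ by $O((1-\delta^3)^{\ell/18})+\ell\exp(-\Omega(\delta^{-1}))$. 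Normalize $|p|_2=1$ with $f=\sgn p$. Since $\sgn p(Z)$ and $\sgn p(Y)$ can disagree only when $|p(Y)|\le|p(Z)-p(Y)|$, for every $t>0$
$$
|\E[f(Z)]-\E[f(Y)]|\ \le\ 2\,\pr_Y\!\big(|p(Y)|\le t\big)\ +\ 2\,\pr_{X,Y}\!\big(|p(Z)-p(Y)|>t\big).
$$

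\emph{The moment bound.} Expanding $p$ about $Y$, the quantity $p(Z)-p(Y)$ is a polynomial of degree $\le 2$ in $(X,Y)$ whose positive-degree monomials all carry a factor of $\alpha$. The only term that is naively of size $\mathrm{poly}(n)$ is the $X$-constant term $\alpha^2\operatorname{tr}(B)$ produced by $\alpha^2 X^\top BX$ (with $B$ the matrix of the quadratic part of $p$), and it is cancelled exactly by the $-\alpha^2\operatorname{tr}(B)$ inside $-\alpha^2 Y^\top BY$; what remains has $L^2$-norm $O\big(\alpha(\|B\|_F+\|b\|)\big)=O(\alpha)$. Applying Lemma~\ref{hypercontractiveLem} in the Gaussian variable $X$ and using that the moments of $Y$ of order up to a small multiple of $\delta^{-1}$ match those of a true Gaussian up to an error that comes multiplied by this $O(\alpha)$-scale and is therefore negligible (Corollary~\ref{foolPolyCor}, since $\delta_1,\delta_2\le\exp(-C\delta^{-1}\log(n/\delta))$), one gets $\|p(Z)-p(Y)\|_m=O(m\alpha)$ for all such $m$; Markov's inequality then yields $\pr_{X,Y}(|p(Z)-p(Y)|>C_0\delta^{-1}\alpha)\le\exp(-\Omega(\delta^{-1}))$ for an absolute constant $C_0$.

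\emph{Anticoncentration at the pseudorandom point.} This is the crux, and a naive attempt goes in circles, because "$\pr_Y(|p(Y)|\le t)$ is small" is itself an instance of "$Y$ fools degree-$2$ threshold functions". The way around is to route through $Z$, which does carry a genuine Gaussian component: applying Lemma~\ref{severalStepsLem} to the two degree-$2$ threshold functions $x\mapsto\sgn(p(x)\mp s)$ gives $\pr_{X,Y}(|p(Z)|\le s)\le\pr_X(|p(X)|\le s)+O(\ell\exp(-\Omega(\delta^{-1})))=O(\sqrt s)+O(\ell\exp(-\Omega(\delta^{-1})))$ by Lemma~\ref{anticoncentrationLem}. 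Combining this with the moment bound (if $|p(Y)|\le t$ and $|p(Z)-p(Y)|\le C_0\delta^{-1}\alpha$ then $|p(Z)|\le t+C_0\delta^{-1}\alpha$) yields, for every $t\ge C_0\delta^{-1}\alpha$,
$$
\pr_Y\!\big(|p(Y)|\le t\big)\ \le\ O(\sqrt{t})\ +\ O(\ell\exp(-\Omega(\delta^{-1}))).
$$

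\emph{Conclusion and the obstacle.} Feeding $t:=C_0\delta^{-1}\alpha$ into the two-term inequality, the second term is $\exp(-\Omega(\delta^{-1}))$ by the moment bound and the first is $O(\sqrt{\delta^{-1}\alpha})+O(\ell\exp(-\Omega(\delta^{-1})))$ by the displayed estimate; since $\alpha=(1-\delta^3)^{\ell/2}$ a short computation gives $O(\sqrt{\delta^{-1}\alpha})=O((1-\delta^3)^{\ell/18})$ (the $\delta^{-1}$ factor is swallowed by lowering the exponent, which is harmless in the range of $\ell$ for which the conclusion is not already vacuous), and adding back the $\ell\exp(-\Omega(\delta^{-1}))$ from Lemma~\ref{severalStepsLem} finishes. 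I expect the anticoncentration step to be the only real obstacle; everything else follows routinely once one spots the $\operatorname{tr}(B)$-cancellation in $p(Z)-p(Y)$, which is exactly what keeps every substantial error term independent of $n$.
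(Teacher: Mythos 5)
Your outline is essentially the paper's: both reduce via Lemma~\ref{severalStepsLem} to comparing $f$ at $Y$ and at the interpolated variable ($Z$ in your notation, $Y'$ in the paper's), both control $p(Z)-p(Y)$ by a moment bound that requires one-at-a-time replacement of the $Y_i$ using Corollary~\ref{foolPolyCor}, and both finish with Carbery--Wright.

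The main place you diverge is in how anticoncentration enters. You bound $\pr_Y(|p(Y)|\le t)$ directly, routing through $Z$; the paper instead never needs anticoncentration at the pseudorandom point. It sets $K$ a suitable power of $\alpha$, defines the shifted threshold functions $f_\pm(x)=\sgn(p(x)\pm K)$, sandwiches $f_-(Y')\le f(Y)\le f_+(Y')$ on the high-probability event $|p(Y)-p(Y')|\le K$, and then compares each of $\E[f_\pm(Y')]$ to $\E[f_\pm(X)]$ by a second invocation of Lemma~\ref{severalStepsLem}, with $|\E[f_+(X)]-\E[f_-(X)]|=O(\sqrt K)$ from Carbery--Wright applied only at the genuine Gaussian $X$. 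This avoids entirely the ``goes in circles'' issue you flagged, and also avoids the hypercontractive estimate $\|p(Z)-p(Y)\|_m=O(m\alpha)$: since anticoncentration is the bottleneck, the simple $L^2$ bound $\E[(p(Y)-p(Y'))^2]=O(\alpha^2)$ together with Markov already suffices, and the extra work on higher moments buys nothing.

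There is also a small quantitative glitch at the end. With $t=C_0\delta^{-1}\alpha$ the anticoncentration term is $O(\sqrt{\delta^{-1}\alpha})=O(\delta^{-1/2}(1-\delta^3)^{\ell/4})$, and $\delta^{-1/2}(1-\delta^3)^{\ell/4}\le(1-\delta^3)^{\ell/18}$ only once $\ell\gtrsim\delta^{-3}\log(1/\delta)$; for smaller $\ell$ your bound can exceed $1$ while the stated right-hand side need not. This is fixable inside your own framework by choosing $t$ to be a small power of $\alpha$ (e.g.\ $t\asymp\alpha^{2/9}$, as the paper effectively does with $K=(1-\delta^3)^{\ell/9}|p|_2$), balancing $\sqrt t$ against $\alpha^2/t^2$ from the Chebyshev bound -- at which point the hypercontractive tail is manifestly unnecessary. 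Your observation about the $\operatorname{tr}(B)$ cancellation in $p(Z)-p(Y)$ is correct and worth making (the constant term can be of size $\sqrt n$, and it does drop out), but in the paper's route this is subsumed automatically because the comparison $\E[q^2]\le(1+\delta^5)^\ell\E[q(X,X_1,\dots,X_\ell)^2]$ lands on a genuinely Gaussian quantity whose size is computed by correlation rather than by expanding into matrix entries.
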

\begin{proof}
Let
$$
Y:=\frac{\sum_{i=1}^\ell (1-\delta^3)^{(\ell-1)/2}Y_i}{\sqrt{\sum_{i=1}^\ell (1-\delta^3)^{\ell-1}}},
$$
and
$$
Y' = (1-\delta^3)^{\ell/2}X + \sqrt{1-(1-\delta^3)^\ell}Y.
$$
By Lemma \ref{severalStepsLem}, it suffices to compare $\E[f(Y)]$ with $\E[f(Y')]$.
To do this, let $p$ be the degree-$2$ polynomial defining the threshold function $f$. Consider
$$
\E\left[\left(p(Y) - p(Y') \right)^2 \right].
$$
We may write this as $\E[q(X,Y_1,\ldots,Y_\ell)^2]$ for an appropriate quadratic polynomial $q$. Letting $X_1,\ldots,X_\ell$ be independent standard Gaussians, we have by repeated use of Corollary \ref{foolPolyCor} that
\begin{align*}
\E[q(X,Y_1,\ldots,Y_\ell)^2] & \leq (1+\delta^5)\E[q(X,X_1,Y_2,\ldots,Y_\ell)^2] \\
& \leq (1+\delta^5)^2\E[q(X,X_1,X_2,Y_3,\ldots,Y_\ell)^2] \\
& \leq \ldots \\
& \leq (1+\delta^5)^\ell \E[q(X,X_1,\ldots,X_\ell)^2]\\
& = (1+\delta^5)^\ell \E\left[\left(p(X)-p\left((1-\delta^4)^{\ell/2}X_1 + \sqrt{1-(1-\delta^4)^\ell}X\right)\right)^2\right]\\
& = O((1-\delta^3)^{\ell/3})|p|_2^2.
\end{align*}

Let $K=(1-\delta^3)^{\ell/9}|p|_2$. By Markov's inequality we have that $|q(X,Y_i)|\leq K$ except with probability at most $O((1-\delta^3)^{\ell/18})$. Let $f_\pm(x)=\sgn(p(x)\pm K).$ By Lemma \ref{anticoncentrationLem}, we have that $|\E[f_+(X)] - \E[f_-(X)]| \leq O(K^{1/2}) = O((1-\delta^3)^{\ell/18}).$ By Lemma \ref{severalStepsLem}, $|\E[f_\pm(X)]-\E[f_\pm(Y')]| \leq \ell\exp(-\Omega(\delta^{-1})).$ On the other hand, with high probability $|p(Y)-p(Y')|\leq K$ and thus with high probability
$$
f_+(Y') \geq f(Y) \geq f_-(Y').
$$
Therefore,
\begin{align*}
\E[f(Y)] & \leq \E[f_+(Y')] + O((1-\delta^3)^{\ell/18})\\
& \leq \E[f_+(X)] + O((1-\delta^3)^{\ell/18}) + \ell\exp(-\Omega(\delta^{-1})) \\
& \leq \E[f(X)]+ O((1-\delta^3)^{\ell/18}) + \ell\exp(-\Omega(\delta^{-1})).
\end{align*}
The lower bound follows similarly, and this completes the proof.
\end{proof}

Theorem \ref{mainThm} now follows immediately.

\begin{proof}
The result follows immediately from Lemma \ref{FinalGeneratorLem}. We can obtain the stated seed length by using the generators from Lemma \ref{ApproxGaussExistLem} and Theorem \ref{ROBPThm}.
\end{proof}

\section*{Acknowledgements}

This research was done with the support of an NSF postdoctoral fellowship.

\end{document}